\newcommand{\tabincell}[2]{\begin{tabular}{@{}#1@{}}#2\end{tabular}}
\newtheorem{theorem}{Theorem}
\newtheorem{corollary}[theorem]{Corollary}
\begin{document}

\title{Small Cell Transmit Power Assignment Based on Correlated Bandit Learning}

\author{Zhiyang~Wang, 
	and~Cong~Shen,~\IEEEmembership{Senior Member,~IEEE}
\thanks{Z. Wang and C. Shen are with the Department of Electronic Engineering and Information Science, School of Information Science and Technology, University of Science and Technology of China, Hefei 230027, China. E-mail:  \texttt{wzy43@mail.ustc.edu.cn}, \texttt{congshen@ustc.edu.cn}.}
}

\maketitle

\begin{abstract}

Judiciously setting the base station transmit power that matches its deployment environment is a key problem in ultra dense networks and heterogeneous in-building cellular deployments. A unique characteristic of this problem is the tradeoff between sufficient indoor coverage and limited outdoor leakage, which has to be met without explicit knowledge of the environment. In this paper, we address the small base station (SBS) transmit power assignment problem based on stochastic bandit theory. Unlike existing solutions that rely on heavy involvement of RF engineers surveying the target area, we take advantage of the human user behavior with simple coverage feedback in the network, and thus significantly reduce the planned human measurement. In addition, the proposed  power assignment algorithms follow the Bayesian principle to utilize the available prior knowledge from system self configuration. To guarantee good performance when the prior knowledge is insufficient, we incorporate the performance \textit{correlation} among similar power values, and establish an algorithm that exploits the correlation structure to recover majority of the degraded performance. Furthermore, we explicitly consider \textit{power switching penalties} in order to discourage frequent changes of the transmit power, which cause varying coverage and uneven user experience. Comprehensive system-level simulations are performed for both single and multiple SBS deployment scenarios, and the resulting power settings are compared to the state-of-the-art solutions. Significant performance gains of the proposed algorithms are observed. Particularly, the correlation structure enables the algorithm to converge much faster to the optimal long-term power than other methods. 

\end{abstract}

\begin{IEEEkeywords}
Coverage optimization; Transmit power assignment; Heterogeneous Network (HetNet).
\end{IEEEkeywords}

\section{Introduction}
\label{sec:intro}

The massive deployment of distributed low-power low-cost small base stations (SBS) has been viewed as one of the most important solutions to address the challenge of exponential growth of the wireless data traffic, particularly for indoor users \cite{Cisco:16}. In practice, SBSs may be deployed in drastically different scenarios, from large warehouses and buildings to small residential apartments and single-office enterprises. In addition, the radio frequency (RF) conditions may vary significantly from one site to another. Due to the heterogeneous nature of these deployments, the transmit power assigned to the SBS, which effectively determines the coverage range, cannot be the same but must be decided based on the individual deployment environment, such as the building layout, the RF conditions, and the locations of the base stations.  Furthermore, indoor enterprise deployments often have stringent access and security constraints. As a result, judiciously setting the SBS transmit power to automatically match its deployment environment is among the most important challenges for in-building SBS network deployment \cite{Quek:13}.

To address this challenge, in-building enterprise networks typically rely on RF engineers to carry out extensive measurement and RF survey to determine the transmit power for appropriate coverage and limited leakage. Then, during live network operations, the RF engineers often need to make extra visits to optimize the transmit power for better performance. Clearly, this is a heavy human-in-the-loop model, as the success of the power setting relies on the experience of the seasoned engineers, the result of the RF survey of the engineers' choice, and the planning software. Not only is this approach expensive, inflexible and error-prone, but it also does not scale with the densification of indoor SBS networks \cite{Ramiro:11}.

Adaptive, automated and autonomous network optimization is the key principle of the self-organizing networks (SON) paradigm \cite{SON}, which aims at achieving the optimal network configuration while minimizing the planned human involvement in the deployment, configuration, optimization and maintenance. Self-optimizing the SBS transmit power falls into the framework of SON, and several solutions have already been proposed. Small Cell Forum has defined a common network monitor mode \cite{SmallCell}, allowing each SBS to periodically measure its surrounding RF environment and adjust its transmit power. This solution relies on an assumed coverage range based on categorization, and the RF measurements are only taken at the SBS location but not over the entire coverage area, which is coarse and may cause RF mismatch \cite{NLM}. To solve these issues, Supervised Mobile Assisted Range Tuning (SMART) was proposed in \cite{SMART}, which relies on the RF feedback of a technician walking along the sampling routes. The required RF feedback is extensive, including majority of the LTE lower layer quantities such as RSRP RSSI, CQI, etc. These quantities along the measurement routes provide important RF information of the deployment, and a global optimization can be formulated to derive the transmit power that satisfies both coverage and leakage constraints. Unfortunately, this problem is non-convex and the optimal transmit power is  difficult to compute \cite{SMART}. In \cite{Lopez:11}, the authors developed a self-organizing policy for distributed femtocell networks, aiming at minimizing the cell transmit power while satisfying the service requirement. In \cite{Sumeeth}, a heuristic solution was proposed to reliably determine the coverage for the current power level before either increasing or decreasing the power based on user feedback.  Solutions from both \cite{Lopez:11} and \cite{Sumeeth} have some adaptability but still lack good accuracy when used in different environments. The authors of \cite{Kim:14} modeled SBS power management  as a Markov Decision Process problem, focusing on the power control in a time-varying network. Similarly, a downlink transmit power control solution for interference mitigation via reinforcement learning was proposed in \cite{Bennis:13}. The main objective of \cite{Kim:14} and \cite{Bennis:13}, however, is to adjust the transmit power in reaction to the changing circumstance for better quality of service, which makes it more of a power control problem that has to be solved at a fast time scale.

We focus on setting the SBS transmit power of an enterprise network in an unknown deployment environment. We limit our attention to SBS networks with \textit{closed access} mode, which is commonly adopted in the enterprise deployment due to security and management considerations. An adequate power assignment is particularly crucial for the closed access mode, as the transmit power needs to be large enough to provide sufficient coverage for the inside users while small enough to not create significant interference to the outside non-enterprise co-channel users, who cannot be served by the enterprise network. This work proposes to capture this delicate balance between coverage and leakage by a system performance indication function (PIF). If the deployment is known, the optimal power assignment can be obtained by maximizing the PIF.

However, a practical solution needs to be effective in an arbitrarily unknown environment, and prefers minimum human involvement and feedback. Naturally, a good solution must compliment the aforementioned \textit{optimization} problem with an \textit{online learning} approach to remove the uncertainty of the environment, which is a key challenge for efficient transmit power assignment. The SBSs have to balance the immediate gains (selecting a power level that performs best so far) and long-term performance (evaluating other power levels). We thus resort to the theory of multi-armed bandit (MAB) \cite{Bubeck:12} to address the resulting exploration and exploitation tradeoff. However, as opposed to directly applying classical MAB algorithms such as UCB \cite{Auer:02}, our problem has two unique characteristics that were not exploited. First, SBS transmit power assignment falls into the \textit{self-optimization} category of SON. Generally, a \textit{self-configuration} phase has already taken place before invoking the transmit power assignment algorithm. As a result, there would be some \textit{prior knowledge} of the system that can be utilized. Second, performances of similar power levels are often very similar, which means that if we adopt the MAB model, nearby arms are highly \textit{correlated}. Intuitively, such correlation can be used to accelerate the convergence to the optimal selection, because any sampling of a power level not only reveals information about itself, but also nearby power levels that are highly correlated. Such information was not available in classical UCB solutions \cite{Bubeck:12,Auer:02}\footnote{The authors of \cite{Robert:04} studied  the continuum-armed bandit with an infinite continuum of strategies, which also captures the dependency among arms. We opt out this approach because in the multi-SBS cases, the continuity of the reward functions may not be guaranteed. The discrete arm setting makes the solutions more effective and flexible for practical adoption.}, and has not been utilized in SON \cite{SMART,Lopez:11,Kim:14,Bennis:13}.

In this paper, we leverage these engineering characteristics of the problem, and develop bandit-inspired transmit power assignment algorithms. In the bandit literature, similar models have been studied in \cite{Reverdy:14,Srivastava:15} and the corresponding bandit algorithms have been proposed. The authors of \cite{Reverdy:14} proposed bandit algorithms with a Bayesian prior on the mean reward that is based on a human decision-making model. \cite{Srivastava:15} further extended the algorithm to focus on the correlation among arms. In our work, we first adopt a Bayesian \cite{Kaufmann:12} learning algorithm that incorporates the prior knowledge of the system from the self-configuration phase. The developed \emph{Bayesian Power Assignment} (BPA) algorithm iteratively updates the posterior distribution based on new observations and the prior distribution, and uses the updated posterior distribution to compute the utility function and determine the transmit power level. In addition to utilizing the prior knowledge, we further leverage the correlation structure of the PIF of similar transmit power levels, and a \emph{Correlated Bayesian Power Assignment} (CBPA) algorithm that combines the Bayesian principle with the correlation property is employed. To the authors' best knowledge, this is the first work that incorporates \textit{bandit with correlated arms} into the design of wireless networks. Furthermore, practical deployment often wants to avoid frequent power changes, because it may cause frequent variation of the coverage area and result in uneven user experience. To address this issue, we present a block allocation extension to the proposed BPA and CBPA algorithms which explicitly considers switching cost to discourage frequent changes of power levels. Rigorous analysis of the performance loss with respect to the genie-aided global optimization solution is carried out. A tight upper bound of the performance loss for the most general algorithm (CBPA with switching cost) is derived, and performance characterization of other algorithms can be obtained as special cases. In order to reduce the algorithms' complexity which increases exponentially with the number of SBSs, we further introduce \textit{clustering} based on the prior knowledge, so that the complexity can be drastically reduced without sacrificing much of the accuracy and effectiveness of the algorithms.  The performances of all the proposed algorithms are verified by extensive system-level simulations and compared with both the globally optimal power assignment with complete information and the existing state-of-the-art solutions. Not only do the proposed algorithms outperform existing solutions and converge to the globally optimal power assignment quickly, but they also reduce the planned human involvement significantly and only require minimum amount of user feedback (one bit per location), as opposed to the full-blown RF measurement and feedback that is universally required in the existing solutions.

The rest of the paper is organized as follows. The system model and problem formulation can be found in Section~\ref{sec:sys}. Section~\ref{sec:alg} and \ref{sec:alg2} present the proposed power assignment algorithms without and with switching cost, respectively. Performance analysis for all the algorithms is given in Section~\ref{sec:analysis}. Complexity issues of the multi-SBS deployment are addressed in Section~\ref{sec:multi}. Simulation results are portrayed in Section~\ref{sec:sim}. Finally, Section~\ref{sec:conc} concludes the paper.

\section{System Model and Problem Formulation}
\label{sec:sys}
\subsection{Network Model}

Both single-SBS and multi-SBS deployments are considered. Note that the former is suitable for modeling single-office enterprises, residential apartments and other small {{deployments}}, while the latter mainly applies to large enterprises, for which multiple SBSs are installed to jointly cover the indoor users. The set of SBSs is indexed as $\mathcal{K}_{SBS}=\{1,2,..,K\}$. Each SBS has a set of candidate pilot\footnote{As the purpose of the long-term power assignment is to determine the appropriate coverage that fits the deployment, we focus on setting the pilot power instead of the power of data and control channels \cite{Quek:13}.} power levels, denoted as $\mathcal{P}=\{p_1,p_2,..,p_n\}$. As our focus is on the SBSs with closed access and co-channel with the macro base stations (MBS), we simply assume that the users at the measurement points inside the enterprise building are served by the SBS network, while users at points outside can only be served by one of the MBSs from $\mathcal{K}_{MBS}=\{1,2,..,K_M\}$, as Fig.~\ref{deploy} illustrates. 

The measurement data come from the customer UE feedback from some inside and outside routes during normal network operations. This is different from the RF survey approach that is carried out during network planning. The detailed mechanism and procedure of obtaining such customer UE feedback are mostly the same as in \cite{Sumeeth}. However, as opposed to a complete RF  feedback required in \cite{Sumeeth}, we only require \textit{one-bit} coverage indication for each inside report. The extended set of RF measurements, such as RSRP, RSSI, and CQI, are not needed in our power assignment algorithm. For non-enterprise UEs, as we only need to know whether the UE is covered at a reporting location, we will rely on the \textit{registration attempt} at the outside location to determine such events. Note that this is a common approach to determine leakage and has been adopted in \cite{Claussen:08,SMART,25304}.

\begin{figure}[h]
        \begin{center}
        \subfigure[ Single-SBS ]{
        \includegraphics[width=0.4\textwidth]{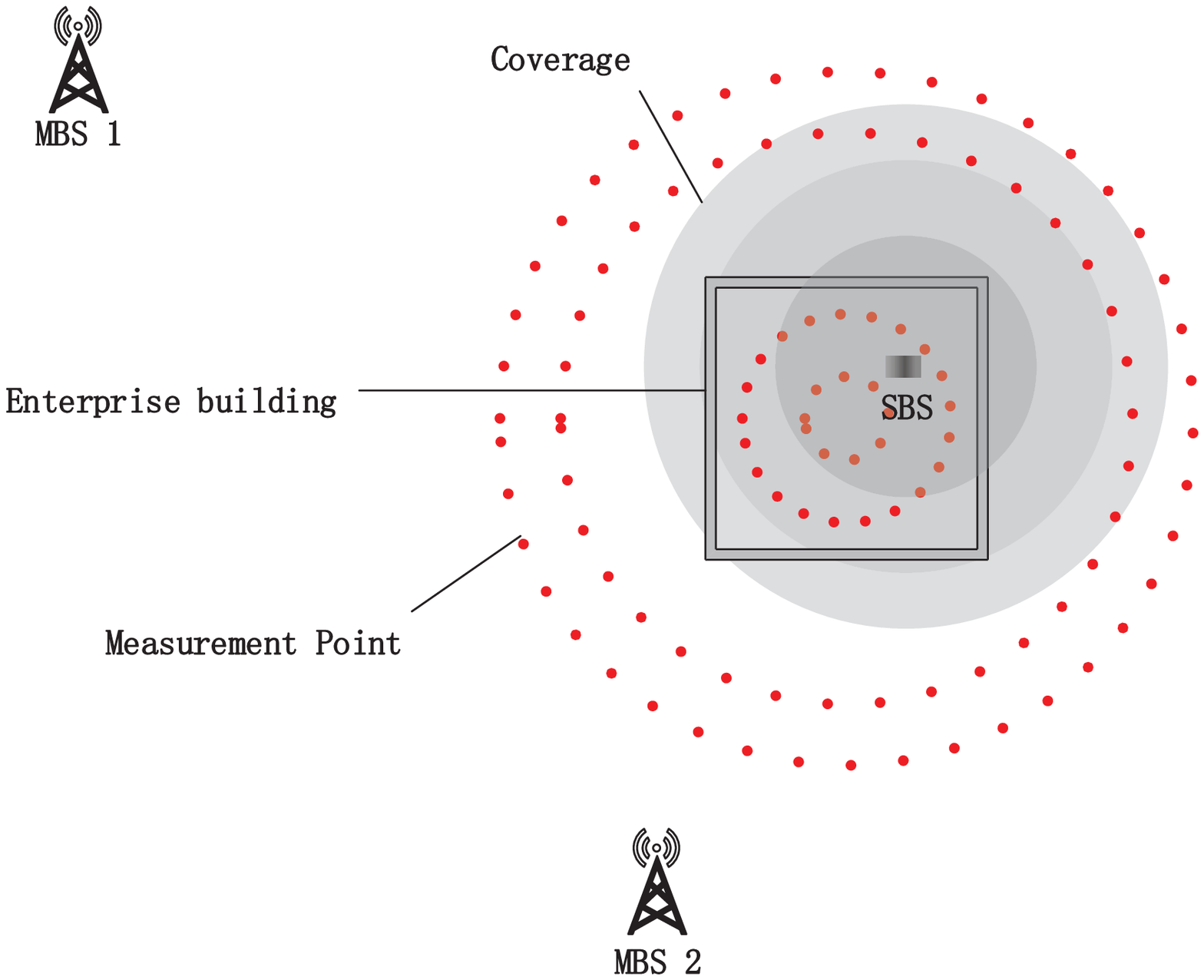}
        \label{fig:sinBS} } \end{center}
        \hfil
        \subfigure[ Multi-SBS ]{
        \includegraphics[width=0.4\textwidth]{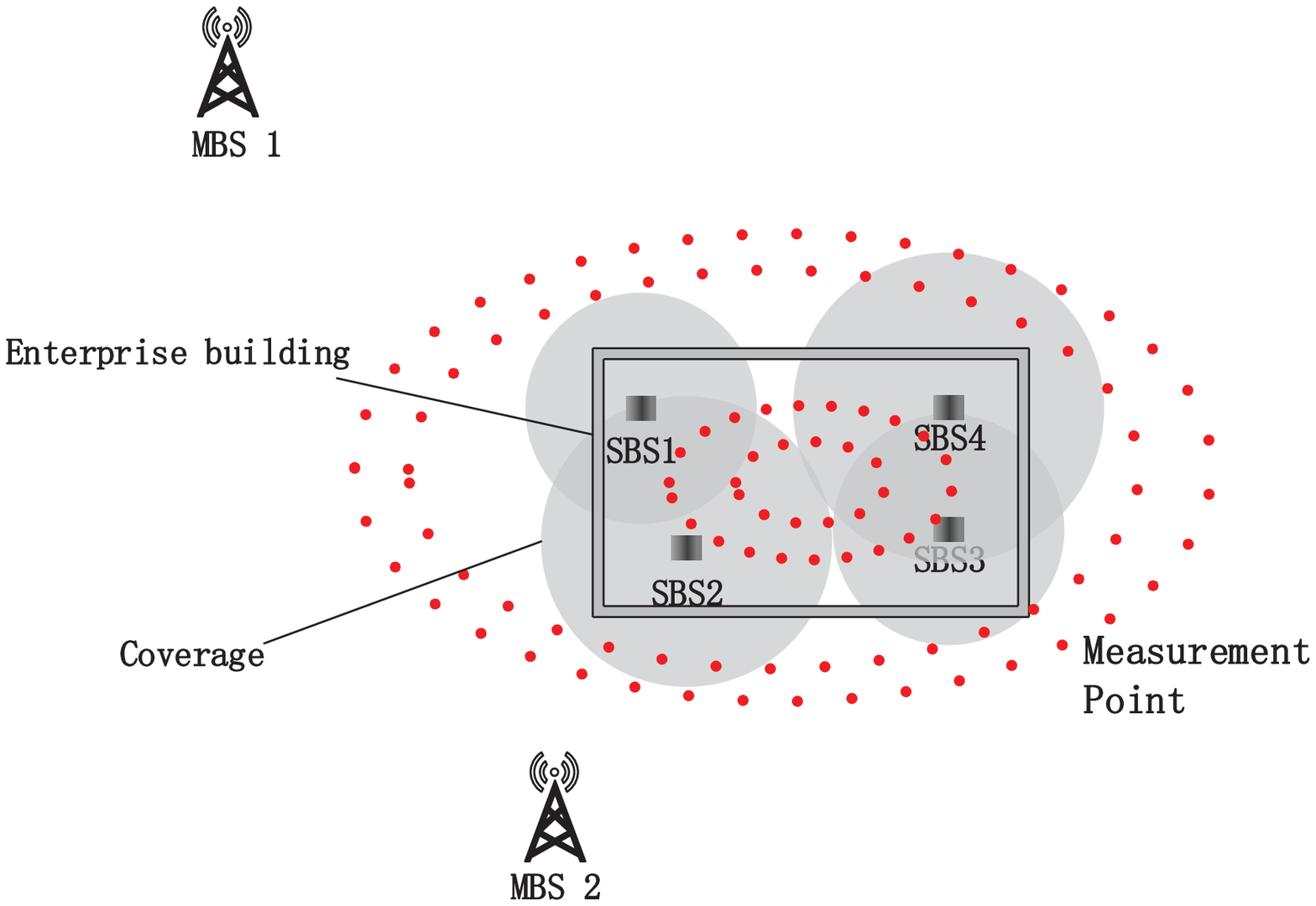}
        \label{fig:mulBS}}
    \caption{An exemplary enterprise SBS network deployment.}
    \label{deploy}
\end{figure}

In this work, our model and procedure on power assignment follow the common industry SON operations \cite{Ramiro:11}. Specifically, the power assignment policy is executed during the \textit{self-optimization} phase of SON, at the central network controller which is configured to oversee the operation of the entire SBS network. This is a common choice for {enterprise} cellular networks, as they often have security and privacy constraints which are easier to be satisfied in a centralized architecture. Furthermore, the power assignment algorithm operates in a periodic fashion, which is typical for self-optimization of SON \cite{Claussen:08}. For each time slot, the SBS first sets the pilot power based on the  assignment algorithm. Then the network operates and collects UE feedback from both inside and outside of the intended coverage area. At the end of the current period, a performance measure is computed to evaluate the current pilot power and then used in the assignment algorithm to compute the power level for the next slot. This sequence of operations is illustrated in Fig.~\ref{timeslot}. Lastly, industry SON operations typically have the \textit{self-optimization} operations follow a \textit{self-configuration} phase, during which a coarse measurement and power calibration are performed \cite{SMART}. As we will see later, the initial self-configuration, albeit coarse and sometimes inaccurate, offers useful prior knowledge that can be leveraged in the power assignment algorithm.

\begin{figure}[h]
\centering\includegraphics[width=0.48\textwidth]{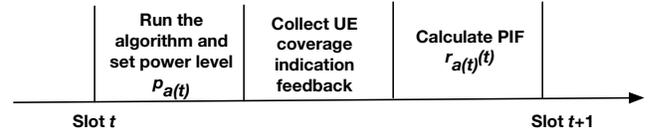}
\caption{The power assignment procedure in a time slot $t$.}
\label{timeslot}
\end{figure}

\subsection{Problem Formulation}
\label{sec:formu}

To formulate the power assignment problem, we first need to define the criteria for coverage and leakage. To that end, let us denote the set of measurement points on the inside and outside routes as $N_{in}=\{1,2,..,n_{in}\}$ and $N_{out}=\{1,2,..,n_{out}\}$, respectively. The coverage and leakage criteria for a measurement point can be formally defined as:
\begin{eqnarray}
\text{coverage: } & \max\limits_{k_S\in \mathcal{K}_{SBS}}{{\sf{SINR}}_{k_S,n}} > {{\sf{SINR}}_{\text{th}}}, \text{for } n \in N_{in}, \label{eqn:cov_def} \\
\text{leakage: } & \max\limits_{k_M\in \mathcal{K}_{MBS}}{{\sf{SINR}}_{k_M,n}} < {{\sf{SINR}}_{\text{th}}},  \text{for }  n\in N_{out}, \label{eqn:leak_def}
\end{eqnarray}
where ${{\sf{SINR}}_{k_S,n}}$ and ${{\sf{SINR}}_{k_M,n}}$ represent the SINR of the measurement point $n$ inside corresponding to SBS $k_S$ and the SINR of point $n$ outside served by MBS $k_M$, respectively. They can be calculated as:
 \begin{eqnarray*}
 {{\sf{SINR}}_{k_S,n}}&=&\frac{P^r_{k_S,n}}{\sum_{i=1,i\neq k_S}^{K}{P^{r}_{i,n}}+ \sum_{j=1}^{K_M}P^{Mr}_{j,n}+N_{s}}, \\
 {{\sf{SINR}}_{k_M,n}}&=&\frac{P^{Mr}_{k_M,n}}{\sum_{i=1}^{K}{P^{r}_{i,n}} + \sum_{j=1,j\neq k_M}^{K_M}P^{Mr}_{j,n}+N_{s}},
 \end{eqnarray*}
where $P^{r}_{i,n}$ and $P^{Mr}_{j,n}$ represent the received power at point $n$ from SBS $i$ and MBS $j$, respectively, $N_{s}$ denotes the uncontrolled noise and interference, and  ${{\sf{SINR}}_{\text{th}}}$ is the SINR threshold.

With the definition at each measurement point, the overall system coverage and leakage are defined as the percentage of measurement points which satisfy the coverage condition \eqref{eqn:cov_def} and leakage condition \eqref{eqn:leak_def}, respectively. If we denote the number of measurement points that satisfy the corresponding conditions as $n_{cov}$ and $n_{lea}$, then the coverage percentage and leakage percentage can be computed as $\eta_{in}=n_{cov}/n_{in}\times 100\%$ and $\eta_{out}=n_{lea}/n_{out}\times 100\%$. Note that a larger pilot transmit power may simultaneously increase the indoor coverage percentage and the outdoor leakage percentage. Hence, the system performance indication function (PIF) associated with each candidate power level must balance coverage and leakage. In this work, we adopt a simple linear PIF as
\begin{equation}
\label{eqn:PIF}
r=\alpha\eta_{in}-(1-\alpha)\eta_{out},
\end{equation}
where {{$\alpha\in[0,1]$}} is a control parameter and can be tuned to weigh differently between coverage and leakage. Note that PIF \eqref{eqn:PIF} is chosen as an example to illustrate the proposed power assignment algorithms. Other meaningful PIFs that capture the tradeoff between coverage and leakage can be used in place of \eqref{eqn:PIF}.
The objective of a power assignment algorithm is to find the optimal solution $p^* \in \mathcal{P}$ that maximizes the PIF \eqref{eqn:PIF}. 

Strictly speaking, the function $r$ in \eqref{eqn:PIF} is a random variable for a given pilot power level. This is due to the random channel effect such as shadowing, fast fading and other disturbance in the deployment environment. We focus on a probabilistic model with \textit{Gaussian} random fluctuation around the mean.  As we will see in Sec.~\ref{sec:sim}, Gaussian distribution indeed is a very good approximation for the actual PIF. Furthermore, we evaluate the proposed algorithms in settings with non-Gaussian PIF distributions, and the empirical results suggest that algorithms developed based on the Gaussian assumption are very effective.

\section{Power Assignment Algorithms based on Bayesian Bandit Learning}
\label{sec:alg}

\subsection{Stochastic Bandit Model}
\label{sec:bandit}

The necessity of balancing the short-term performance and long-term learning has motivated us to take a stochastic multi-armed bandit approach to the power assignment problem. Specifically, we model the set of candidate pilot power values $\mathcal{P}=\{p_1,p_2,..,p_n\}$ as $n$ arms, denoted by $\mathcal{N}_{pow}=\{1,2,..,n\}$. At the beginning of each time slot $t=1,2,..,T$, a power value $p_{a(t)}\in\mathcal{P},a(t)\in\mathcal{N}_{pow}$ is selected. At the end of the time slot $t$, the SBS observes a performance feedback $r_{a(t)}(t)$ based on UE measurement reports, which corresponds to $reward$ in the bandit theory. As discussed in Sec.~\ref{sec:sys}, we model the random PIF associated with \textit{each} power value as a Gaussian random variable. The objective is to develop an efficient power assignment solution to maximize the cumulative PIF for any given time horizon $T$. For the multi-SBS case, each arm corresponds to a set of power levels of all $K$ SBSs, and other definitions remain the same.

In multi-armed bandit theory, a quantity termed as \emph{expected cumulative regret} \cite{Bubeck:12} is often used to characterize the algorithm performance, which represents the cumulative difference between the reward of the arms chosen and the maximum expected reward, which is attainable by a ``genie'' who knows the expected reward of all arms. We comment that minimizing the expected cumulative regret is equivalent to maximizing the expected accumulated reward, which is the objective of the power assignment problem. This is because the maximum expected reward is independent of the adopted learning algorithm and the regret is equivalent to the performance loss of any power assignment problem due to learning.

Formally,  we denote
\begin{equation}
\label{eqn:cumR}
G_T=\sum\limits^{T}\limits_{t=1} r_{a(t)}(t)
\end{equation}
as the cumulative PIF up to a given time horizon $T>0$, and we define the cumulative PIF loss \textit{due to learning} as
\begin{eqnarray}
\label{eqn:regret}
R_{T} =G_T^{*}-G_T  = \max\limits_{i=1,..,n}\left(\sum\limits^{T}\limits_{t=1}r_{i}(t)\right)-\sum\limits^{T}\limits_{t=1}r_{a(t)}(t),
\end{eqnarray}
which corresponds to the definition of cumulative regret. Here the optimal power level can be obtained by a genie-aided solution, e.g. a global optimization of the expected PIF with complete RF information from the technician survey. We are interested in finding efficient algorithms that maximize the cumulative PIF \eqref{eqn:cumR}. Equivalently, the goal is to minimize the PIF loss of the system \eqref{eqn:regret} for any given time horizon $T$. The expected PIF loss can be written as:
\begin{eqnarray}
\mathbb{E}[R_{T}]&=&T\mu^*-\mathbb{E}\sum\limits^T\limits_{t=1}\mu_{a(t)} \nonumber \\
&=&\left(\sum\limits_{i=1}\limits^n\mathbb{E}[N_i(T)]\right)\mu^*-\mathbb{E}\sum\limits_{i=1}\limits^n N_i(T)\mu_i \nonumber \\
&=& \sum\limits_{i=1}\limits^n\Delta_i\mathbb{E}[N_i(T)],
\label{eqn:regdelta}
\end{eqnarray}
where $\mu^{*}=\max\limits_{i=1,..,n}\mu_{i}$ is the true mean PIF of the optimal power level and $\Delta_{i}=\mu^{*}-\mu_{i}$ measures the mean PIF gap between the chosen power level and the optimum. $N_i(T)$ represents the number of times power level $p_i$ is selected. According to the ground-breaking work  of Lai and Robbins \cite{LaiRobbins}, if the expected loss $\mathbb{E}[R_{T}]$ of our proposed algorithms can be upper bounded\footnote{$\log(\cdot)$ represents natural logarithm if the base is not specified.} by $\mathcal{O}(\log T)$, an asymptotically optimal performance is achieved in the sense that  the convergence rate is of the same order as the optimum.

\subsection{Bayesian Power Assignment Algorithm}
\label{sec:BPA}

The first algorithm utilizes the \textit{prior} knowledge of the PIF estimation \textit{before} the algorithm is invoked. In practice, the most common form for the prior knowledge comes from the self-configuration phase of SON, which is performed during network initialization. This phase can provide us with some prior estimation of the PIFs as it typically tries different power levels before settling on one. However, all practical SON solutions have certain requirements on the elapsed time of the self-configuration operations. This is because self-configuration affects the boot-up time, and thus must be carefully controlled. As a result, massive measurement during self-configuration is typically out of the question and we often encounter  coarse initial setup. Another possibility is that as the proposed power assignment algorithm is recursive  over time, it also progressively collects PIF estimations for each selected power level. This can be used iteratively to update the prior knowledge. {The quality of the prior depends on the detailed process in self-configuration phase, e.g. the time duration, mechanisms for large power settings, which is uncontrollable and out of scope of this paper. However, }it is worth noting that the proposed algorithms also work {with inaccurate prior or even }without any prior knowledge, at the expense of slower convergence.

We first consider the power assignment algorithm without considering the correlation between power levels. We adopt the well-known Bayesian principle \cite{Kaufmann:12} that integrates the prior distribution and quantiles of the posterior distribution. {{The proposed \textit{Bayesian Power Assignment (BPA)} algorithm, which adopts the deterministic \textit{upper credible limit (UCL)} principle in \cite{Reverdy:14}, is given in Algorithm~\ref{alg:BPA}.}} In this algorithm, $\{\mu_{i}^{0}, \sigma_{0}^{2}\}$ denotes the prior knowledge of the Gaussian distribution for PIF. The utility function defined in step~\ref{alg:Qbpa} is composed of an estimated performance term and a measure of uncertainty, which reflects the tradeoff between exploration and exploitation. More specifically, $\Phi^{-1}:(0,1)\rightarrow\mathbb{R}$ is the inverse cumulative distribution function (CDF) for a standard Gaussian random variable. We use the quantile function to indicate: $\mathbb{P}(\mu_{i}\leq Q_{i}^{\text{BPA}}(t))=1-1/(\sqrt{2\pi e}t^2)$. Asymptotically, the true mean PIF $\mu_{i}$ is more likely to be less than the estimation $Q_{i}^{\text{BPA}}$, which leads to the convergence to the optimal power level.

\begin{algorithm}
\caption{The Bayesian Power Assignment (BPA) Algorithm}
\label{alg:BPA}
\begin{algorithmic}[1]
\begin{spacing}{1.3}
\REQUIRE Prior estimation of PIF mean $\{\mu_{i}^{0}\}_{i=1}^{n}$, variance $\sigma_{0}^{2} $ 
\ENSURE  $N_{i}(t)=0, \bar{r}_{i}(t)=0, Q^{BPA}_{i}(t)=0, \hat{\mu}_i(1)=\mu_i^0, \hat{\sigma}_{i}(1)=\sigma_0$ for all $i\in\mathcal{N}_{pow}$, $t\in{1,..,T}$.

\FOR{$t \in {1,2,..,T}$}
\STATE For each arm $i \in\mathcal{N}_{pow}$ update the utility function:
\\$Q^{\text{BPA}}_{i}(t)= \hat{\mu}_{i}(t)+\hat{\sigma}_{i}(t)\Phi^{-1}(1-1/(\sqrt{2\pi e}t^2))$, \label{alg:Qbpa}
\STATE Select a power value $p_{a(t)}$ according to:
\\$a(t)=\arg\max\{Q^{\text{BPA}}_{i}(t)|i \in \mathcal{N}_{pow}\}$,
\STATE Observe the PIF $r_{a(t)}(t)$,
\STATE Update the average PIF and the selected times of $p_{a(t)}$:
\\$\bar{r}_{a(t)}(t+1)=\frac{N_{a(t)}(t)\bar{r}_{a(t)}(t)+r_{a(t)}(t)}{N_{a(t)}(t)+1}$,
\\ $N_{a(t)}(t+1) = N_{a(t)}(t)+1$,
\STATE Update the estimated mean and variance of PIF of power level $p_{a(t)}$:
\\$\hat{\mu}_{a(t)}(t+1)=\frac{\mu_{a(t)}^{0}+N_{a(t)}(t+1)\bar{r}_{a(t)}(t+1)}{N_{a(t)}(t+1)+1}$,
\\$\hat{\sigma}_{a(t)}(t+1)=\frac{\sigma_{0}}{\sqrt{N_{a(t)}(t+1)+1}}$. \label{alg:cpaupdate}
\ENDFOR
\end{spacing}
\end{algorithmic}
\end{algorithm}

If the prior knowledge is not available, the BPA algorithm can be slightly modified to address this issue. Specifically, the estimated PIF term and uncertainty measurement have to be updated simultaneously in each time slot. This philosophy leads to the following utility function:
\begin{equation}
\label{eqn:uipaQ}
\begin{split}
& Q^{{\text{UiPA}}}_{i}(t) =\bar{r}_{i}(t)+ \\
& \sqrt{\frac{\sum\limits_{\tau=1}\limits^{t}{r_{i}^2(\tau)}-\bar{r}_i^2(t)N_{i}(t)}{(N_{i}(t)-1)N_{i}(t)} }\Phi^{-1}(1-1/(\sqrt{2\pi e}t^2)).
\end{split}
\end{equation}
The \textit{Uninformative Power Assignment (UiPA)} algorithm thus can be obtained by replacing the utility function in step \ref{alg:Qbpa} of Algorithm~\ref{alg:BPA} with (\ref{eqn:uipaQ}), while removing the prior input at the beginning and estimation state update in step~\ref{alg:cpaupdate}.

\subsection{Correlated Bayesian Power Assignment Algorithm}

In the BPA algorithm, $\{\mu_{i}^{0}, \sigma_{0}^{2}\}$ is used as our prior knowledge of performance for each power level. If the PIFs of different arms are independent, then utilizing individual Gaussian distributions is sufficient in our framework. However, for the considered power assignment problem, the PIFs of similar transmit power levels are generally correlated due to the slow and continuous changing nature of RF propagation. In other words, a stronger PIF correlation exists between adjacent power levels than distant pairs, and leveraging the full covariance matrix of the joint distribution may provide significant performance boost compared to the BPA algorithm. Intuitively, if a transmit power level results in a bad PIF with respect to the balance of coverage and leakage, then an intelligent algorithm may not need to waste much exploration on its immediate neighboring power levels, as they are highly likely to be bad as well.

\begin{algorithm}
\caption{The Correlated Bayesian Power Assignment (CBPA) Algorithm}
\label{alg:CBPA}
\begin{algorithmic}[1]
\begin{spacing}{1.3}
\REQUIRE Prior estimation of joint Gaussian distribution of the PIFs: $\mathcal{N}(\bm{\mu}_{0},\Sigma_{0}) $;
\ENSURE $N_{i}(t)=0, \bar{r}_{i}(t)=0, Q^{CBPA}_{i}(t)=0, \hat{\mu}_{i}(1)=\mu_{i}^0, \hat\Sigma_1=\Sigma_0$ for all $i\in\mathcal{N}_{pow}$ and $t\in{1,..,T}$.
\FOR{$t \in {1,2,..,T}$}
\STATE For each $i\in\mathcal{N}_{pow}$ update the utility function
\\$Q^{\text{CBPA}}_{i}(t)=\hat{\mu}_{i}(t)+\hat{\sigma}_{i}(t) \sqrt{\sum \limits_{j=1} \limits^{n} \rho_{ij}^2(t)} \Phi^{-1}(1-1/(\sqrt{2\pi e} t^2))$,\label{alg:Qcbpa}
where $\rho_{ij}(t)$ is the correlation coefficient between power value $i$ and $j$ {{at time $t$, which is obtained from $\hat{\Sigma}_t$}}; $\hat{\mu}_{i}(t)$,$\hat{\sigma}^2_{i}(t)$ is the $i$-th entry of $\hat{\bm{\mu}}_{t}$ and diagonal entry of $\hat{\Sigma}_{t}$.
\STATE Select a power value $p_{a(t)}$ according to:
\\$a(t)=\arg\max\{Q^{\text{CBPA}}_{i}(t)|i \in \mathcal{N}_{pow}\}$,
\STATE Collect the performance function $r_{a(t)}(t)$,
\STATE Update the average performance and the selected time of $p_{a(t)}$:
 \\$\bar{r}_{a(t)}(t+1) = \frac{N_{a(t)}(t)\bar{r}_{a(t)}(t)+r_{a(t)}(t)}{N_{a(t)}(t)+1}$,
 \\$N_{a(t)}(t+1) = N_{a(t)}(t)+1$,
\STATE Update the estimation state: \\$\hat{\bm{\mu}}_{t+1}=(\Sigma_{0}^{-1}+P(t+1)^{-1})^{-1}(P(t+1)^{-1}\bar{\mathbf{r}}_{t+1}+\Sigma_{0}^{-1}\bm{\mu}_{0})$,
\\$ \hat{\Sigma}_{t+1}^{-1}=\Sigma_{0}^{-1}+P(t+1)^{-1}$. \label{alg:cbpaupdate}
\ENDFOR
\end{spacing}
\end{algorithmic}
\end{algorithm}

We formally present the \textit{Correlated Bayesian Power Assignment (CBPA)} algorithm in Algorithm~\ref{alg:CBPA}. Let $\mathcal{N}(\bm{\mu}_{0},\Sigma_{0})$ be a correlated prior assumption while $\Sigma_{0}$ is a positive definite matrix, we define $\{\bm{\phi}_{t}\in \mathbb{R}^{n}\}_{t\in \{1,..,T\}}$ as the indicator vector to reveal the currently selected power value $p_{a(t)}$, i.e.,
\begin{equation*}
({\bm{\phi}}_{t})_{k}=
\begin{cases}
1& \text{$k=a(t)$,}\\
0& \text{otherwise,}
\end{cases}
\end{equation*}
where $({\bm{\phi}}_{t})_{k}$ represents the $k$-th entry of $\bm{\phi}_{t}$.
The estimation of the mean PIFs and correlation structure of the PIF (${\bm{\mu}_{t},\Sigma_{t}}$) is updated following the Bayesian principle \cite{Srivastava:15}:
\begin{equation*}
\begin{aligned}
&\mathbf{q}_t=\frac{r_{t}\bm{\phi}_{t}}{\sigma_{0}^{2}}+\hat{\Lambda}_{t-1}\hat{\bm{\mu}}_{t-1},
&\hat{\Lambda}_{t}=\frac{\bm{\phi}_{t}\bm{\phi}_{t}^{T}}{\sigma_{0}^{2}}+\hat{\Lambda}_{t-1},
\\&\hat{\Sigma}_{t}=\hat{\Lambda}^{-1}_{t}, &\hat{\bm{\mu}}_{t}=\hat{\Sigma}_{t}\mathbf{q}_t=\hat{\Lambda}_{t}^{-1}\mathbf{q}_t,
\end{aligned}
\end{equation*}
where $r_{t}$ is the PIF observed at time slot $t$. To derive a general expression of the estimation, we introduce a diagonal matrix $P(t)$ with entries $\sigma_{0}^2/N_{i}(t), i\in\mathcal{N}_{pow}$, and $\bar{\mathbf{r}}_{t}$ is the vector of $\bar{r}_{i}(t), i\in\mathcal{N}_{pow}$. We first rewrite the expression of $\hat{\Lambda}_{t}$ as:
\begin{equation}
\label{lambda}
\begin{aligned}
\hat{\Lambda}_{t}&=\frac{\bm{\phi}_{t}\bm{\phi}_{t}^{T}}{\sigma_{0}^{2}}+\frac{\bm{\phi}_{t-1}\bm{\phi}_{t-1}^{T}}{\sigma_{0}^{2}}+\hat{\Lambda}_{t-2} \\
& =\frac{\bm{\phi}_{t}\bm{\phi}_{t}^{T}}{\sigma_{0}^{2}}+\frac{\bm{\phi}_{t-1}\bm{\phi}_{t-1}^{T}}{\sigma_{0}^{2}}+\ldots+\frac{\bm{\phi}_{1}\bm{\phi}_{1}^{T}}{\sigma_{0}^{2}}+\Lambda_{0}\\
           &=\frac{1}{\sigma_{0}^2}
           \begin{pmatrix}
           {N_{1}(t)} &\quad &\quad & \quad \\
           \quad & {N_{2}(t)} & \quad &\quad\\
           \quad & \quad  & \ddots & \quad\\
           \quad & \quad & \quad & {N_{n}(t)}
           \end{pmatrix}
           +\Lambda_{0} \\ 
           & =P(t)^{-1}+\Lambda_{0}.
\end{aligned}
\end{equation}
Then, $\hat{\bm{\mu}}_{t}$ can be derived based on (\ref{lambda}):
\begin{equation}
\label{mu}
\begin{aligned}
\hat{\bm{\mu}}_{t}&=\hat{\Lambda}_{t}^{-1}\mathbf{q}_{t}=\hat{\Lambda}_{t}^{-1}\left(\frac{r_{t}\bm{\phi}_{t}}{\sigma_{0}^2}+\hat{\Lambda}_{t-1}\hat{\Sigma}_{t-1}\bm{q}_{t-1}\right)\\
            &=\hat{\Lambda}_{t}^{-1}\left(\frac{r_{t}\bm{\phi}_{t}}{\sigma_{0}^2}+\frac{r_{t}\bm{\phi}_{t}}{\sigma_{0}^2}+\ldots+\frac{r_{t}\bm{\phi}_{t}}{\sigma_{0}^2}+\Lambda_{0}\bm{\mu}_{0}\right)\\
            &=\hat{\Lambda}_{t}^{-1}
            \left(
            \begin{pmatrix}
            \frac{N_{1}(t)}{\sigma_{0}^2}\bar{r}_{1}(t) &\quad &\quad \\
            \quad  & \ddots & \quad\\
            \quad &  \quad &\frac{N_{n}(t)}{\sigma_{0}^2}\bar{r}_{n}(t)
            \end{pmatrix}
            +\Lambda_{0}\bm{\mu}_{0}
            \right)\\
            &=(\Lambda_{0}+P(t)^{-1})^{-1}(P(t)^{-1}\bar{\mathbf{r}}_{t}+\Lambda_{0}\bm{\mu}_{0}).
\end{aligned}
\end{equation}
Finally, combining equation \eqref{lambda} and \eqref{mu}, the estimation at time slot $t$ can be written as:
\begin{equation*}
\begin{aligned}
\hat{\Lambda}_{t}&=P(t)^{-1}+\Lambda_{0}, \\
\hat{\bm{\mu}}_{t}&=(\Lambda_{0}+P(t)^{-1})^{-1}(P(t)^{-1}\bar{\mathbf{r}}_{t}+\Lambda_{0}\bm{\mu}_{0}),
\end{aligned}
\end{equation*}
which is used in Algorithm~\ref{alg:CBPA}.

\section{Power Assignment with Switching Cost}
\label{sec:alg2}

\subsection{Problem Formulation with Switching Cost}

In practice, it is very critical for any practical cellular deployment to avoid frequent power changes. In a cellular network, coverage variation due to the change of transmit power often results in poor user experience (call drop, low data rate, frequent handover, etc.), which in turn degrades the network performance significantly. To address this problem, we explicitly add a switching cost when the power level changes. In this way, a good power assignment policy will determine the optimal power value while minimizing frequent switches. We adopt a general switching loss function $s_{ij}=f(|p_{i}-p_{j}|)$, which is a bounded non-decreasing function of the difference between the two power values {{with $f(0)=0$}}. $s_{ij}$ is incurred whenever SBS changes its pilot power value between $p_j$ and $p_i$. The cumulative switching cost up to $T$ can be written as:
\begin{equation*}
{\sf{SC}}(T)=\sum\limits_{t=2}\limits^T s_{a(t)a(t-1)}=\sum\limits_{t=2}\limits^T f(|p_{a(t)}-p_{a(t-1)}|).
\end{equation*}
Thus the cumulative PIF in this problem can be expressed as:
\begin{equation*}
G^S_T=G_T-{\sf{SC}}(T).
\end{equation*}
In a multi-SBS deployment, the switching cost is defined as the sum of individual switching costs of all SBSs.

\subsection{The Power Assignment Algorithm with Switching Cost}
\label{sec:PASCalg}

\begin{figure*}
\centering\includegraphics[width=0.95\textwidth]{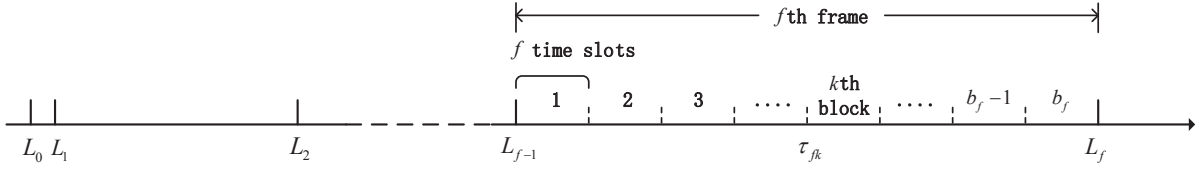}
\caption{The block allocation scheme used in BPA-SC and CBPA-SC.}
\label{blockscheme}
\end{figure*}

We extend the preceding algorithms to a \textit{block allocation} scheme to address switching costs. Block allocation schemes, such as the one in \cite{Agrawal:88}, determine specific intervals of time over which the selection is consistent. A power value is selected at the beginning of each interval. The construction of the intervals should ensure the expected number of switches scales at most logarithmically in time to guarantee good performance. This idea is graphically presented in Fig.~\ref{blockscheme}. We first divide time into frames whose last time slot is denoted as $L_f,f\in\{1,2...l\},l=\lceil \sqrt{\log_2T} \rceil$. Each frame is then subdivided into $b_f=\lceil  ( 2^{f^2}-2^{(f-1)^2} ) / f  \rceil$ blocks each of which contains $f$ time slots. Each block is identified by $(f,k),f\in\{1,2,..,l\},k\in\{1,2,..,b_f\}$, with $f$ and $k$ representing the frame number and block number within the frame respectively. The beginning time slot of block $k$ in the $f$-th frame is denoted as $\tau_{fk}$. Note that the key element in selecting the blocking length is to only incur ${o}(\log T)$ switching cost. In this way, the $\mathcal{O}(\log T)$ regret of the standard algorithm still dominates the total regret.

\begin{algorithm}
\caption{The Power Assignment with Switching Cost Algorithm}
\label{alg:scUCL}
\begin{algorithmic}[1]
\begin{spacing}{1.3}
\REQUIRE Prior estimation of PIF mean: $\mathcal{N}(\bm{\mu}_{0},\Sigma_{0}) $;
\ENSURE  $N_{i}(t)=0, \bar{r}_{i}(t)=0, Q_{i}(t)=0, \hat{\mu}_i(1)=\mu_i^0, \hat{\Sigma}_1=\Sigma_0$ for all $i\in\mathcal{N}_{pow}$, $t\in{1,..,T}$.
\FOR{$f \in \{1,2,..,l\}$}
\FOR{$k \in \{1,2,..,b_f\}$}
\STATE The beginning time slot of $k$-th block in the $f$-th frame $\tau_{fk}=L_{f-1}+1+f(k-1)$,
\STATE For each $i\in\mathcal{N}_{pow}$ update the utility function $Q_{i}$, \label{alg:Qsc}
\STATE Select a power value $p_{a^*}$ according to:
\\$a^*=\arg\max\{Q_{i}|i \in \mathcal{N}_{pow}\}$,
\STATE Keep SBS on power value $p_{a^*}$ for the next $(n_f-1)$ slots,
\STATE Collect the performance function $r_{a^*}(t)$, possibly excluding a switching loss $s_{a(t)a(t-1)}$, $t \in \{\tau_{fk},\tau_{fk}+1,..,T_e\}$, $T_e=\tau_{fk}+f-1,n_f=f$ if $\tau_{fk}+f-1 \leqslant T$, otherwise $T_e=T,n_f=T-\tau_{fk}+1$;
\STATE Update the average performance and the selected time of $p_{a^*}$:
 \\$\bar{r}_{a^*} = \frac{N_{a^*}\bar{r}_{a^*}+\sum_{t=\tau_{fk}}^{T_e}(r_{a^*}(t)-s_{a(t)a(t-1)})}{N_{a^*}+n_f}$,
 \\$N_{a^*} = N_{a^*}+n_f$,
\STATE Update the estimation state. \label{alg:estista}
\ENDFOR
\ENDFOR
\end{spacing}
\end{algorithmic}
\end{algorithm}

The \emph{Power Assignment with Switching Cost} algorithm is formally presented in Algorithm~\ref{alg:scUCL}. The \emph{Uninformative} (UiPA-SC), \emph{Bayesian} (BPA-SC) and \emph{Correlated Bayesian Power Assignment with Switching Cost} (CBPA-SC) algorithms can be similarly obtained, by replacing $Q_i$ with $Q^{\text{UiPA}}_i$, $Q^{\text{BPA}}_i$ and $Q^{\text{CBPA}}_i$ respectively. Note that in BPA-SC, the prior estimation state $\Sigma_0$ becomes a diagonal matrix with entries $\sigma_0^2$, while there is no prior input in UiPA-SC. At the beginning of each block, a power value is selected and the SBS locks on this power value in each of the next $f$ time slots in the block. The estimation update in step~\ref{alg:estista} also follows step~\ref{alg:cpaupdate} in Algorithm~\ref{alg:BPA} and step~\ref{alg:cbpaupdate} in Algorithm~\ref{alg:CBPA}.

There are two key ideas of Algorithm~\ref{alg:scUCL}. The first is that since the switching cost results in a penalty in performance, the algorithm needs to ``explore in bulk''. This is done by grouping time slots and not switching within these slots. The second is that as time goes by, the algorithm has more  information about the optimal power value, and hence the block size should increase to take advantage of the better knowledge.

\section{Performance Analysis of the Proposed Algorithms}
\label{sec:analysis}

So far, we have presented two sets of power assignment algorithms (without and with switching cost), each of which further consists of components that have different assumptions on the prior knowledge and the correlation structure. In this section, we will provide a \textit{unified} performance analysis framework that can be applied to \textit{all} of the developed algorithms. We focus on the \textit{finite-time} analysis where, for a given stopping time $T$, the cumulative PIF loss and the convergence speed will be characterized.  In this way, we can shed important light on the fundamental differences of these algorithms, and how these differences impact their performances.

We start with the expected cumulative PIF loss defined in Sec.~\ref{sec:bandit}. For BPA and CBPA, the expected PIF loss can be written as \eqref{eqn:regdelta}. When the switching cost is considered, equation \eqref{eqn:regret} and \eqref{eqn:regdelta} should be rewritten as:
\begin{eqnarray*}
&&R^{SC}_{T} = G_T^{*}-G^S_T \\ &=&\max\limits_{i=1,..,n}\left(\sum\limits^{T}\limits_{t=1}r_{i}(t)\right)-\sum\limits^{T}\limits_{t=1}r_{a(t)}(t)+\sum\limits_{t=2}^Ts_{a(t)a(t-1)},
\end{eqnarray*}
and
\begin{eqnarray*}
\mathbb{E}[R^{SC}_{T}]&=&T\mu^*-\mathbb{E}\left(\sum\limits^T\limits_{t=1}\mu_{a(t)}-{{\sf{SC}}(T)}\right) \\
&=& \sum\limits_{i=1}\limits^n\Delta_i\mathbb{E}[N_i(T)]+\mathbb{E}[{\sf{SC}}(T)],
\end{eqnarray*}
respectively.

\subsection{Upper Bound Analysis}
In order to derive the unified framework that applies to all the algorithms, we focus on analyzing CBPA-SC as it is the most general algorithm consisting of all the key components. As we have discussed, the expected cumulative PIF loss should grow sub-linearly with $T$ in order to achieve the optimal performance, which indicates that $\lim_{T\rightarrow\infty}R_T/T=0$. We have the following theorem to bound the expected cumulative PIF loss of CBPA-SC.

\begin{theorem}
\label{thm:regretsc}
The expected cumulative PIF loss $\mathbb{E}[R^{SC}_{T}]$ of CBPA-SC is bounded above as:
\begin{equation*}
\begin{split}
&\mathbb{E}[R^{SC}_{T}] \leqslant \sum\limits_{i=1,i\neq i^*}\limits^n\Delta_i\mathbb{E}[N_i(T)] + \mathbb{E}[{\sf{SC}}(t)]
\\
& \leqslant  \sum_{i=1,i\neq i^*}^n \Big( \Delta_i(C_1^i\log T+C_2^i)+(\tilde{s}_i^{max}+\tilde{s}_{i^*}^{max})\mathbb{E}[S_i(T)]  \Big) \\
 & \qquad\qquad  + \tilde{s}_{i^*}^{max}
\\ 
&\leqslant\sum\limits_{i=1,i\neq i^*}\limits^n\Delta_i(C_1^i\log T+C_2^i)+\sum\limits_{i=1,i\neq i^*}\limits^n(\tilde{s}_i^{max}+\tilde{s}_{i^*}^{max})\quad
\\
&\Bigg(\log2C_1^i\sqrt{\log_2T} +(C_2^i+\log2C_1^i)\left(1+\frac{\pi^2}{6}\right)\Bigg)+\tilde{s}_{i^*}^{max},
\end{split}
\end{equation*}
where
\begin{equation*}
C_1^i=\frac{16\sigma_{0}^2}{\Delta_{i}^2}+\frac{\log 2}{2}\left(e^{\frac{3M_{i^*}^2}{2\sigma_0^2}}+e^{\frac{3M_{i}^2}{2\sigma_0^2}}\right),
\end{equation*}
\begin{equation*}
C_2^i=\frac{4\sigma_{0}^2}{\Delta_{i}^2}\log\sqrt{2\pi e}+\left(e^{\frac{M_{i^*}^2}{3\sigma_0^2}}+e^{\frac{M_{i}^2}{3\sigma_0^2}}\right),
\end{equation*}
$\delta_i^2=\sigma_0^2/\sigma_{i-cond}^2$, and $\sigma_{i-cond}^2=\sigma_{0}^2-{\bf{\sigma}}_{i}(0)\Sigma_{\sim i}^{-1}(0){\bf{\sigma}}_{i}^T(0)$. $M_{i}=\sigma_0^2\sqrt{1+\delta_i^2}\sum\limits_{j=1}\limits^n\sum\limits_{k=1}\limits^n|\lambda_{kj}^0||\mu_{j}^0-\mu_{j}|$ measures the accuracy of the prior knowledge, where $\Sigma_{\sim i}$ is the submatrix of $\Sigma_0$, which excludes the $i$-th column and $i$-th row and 
$\lambda^0_{kj}$ is the component of $\Lambda_0$. $\tilde{s}_i^{max}=\max_{j=1,..,n}\mathbb{E}[s_{ij}]$ is the maximum expected switching loss when SBS changes power to $p_i$.
\end{theorem}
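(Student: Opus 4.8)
The plan is to establish the three nested inequalities one at a time, reducing the whole bound to two counting quantities: the expected number of suboptimal selections $\mathbb{E}[N_i(T)]$ and the expected number of \emph{blocks} $\mathbb{E}[S_i(T)]$ in which a suboptimal power value $i$ is chosen. The topmost inequality is simply the regret decomposition $\mathbb{E}[R^{SC}_T]=\sum_{i\neq i^*}\Delta_i\mathbb{E}[N_i(T)]+\mathbb{E}[{\sf{SC}}(T)]$ from Sec.~\ref{sec:analysis}, after dropping the vanishing $i=i^*$ term. Since the block scheme changes power only at block boundaries and every individual switch into $p_i$ costs at most $\tilde{s}_i^{max}$, the cumulative switching cost satisfies ${\sf{SC}}(T)\le\sum_{i\neq i^*}(\tilde{s}_i^{max}+\tilde{s}_{i^*}^{max})S_i(T)+\tilde{s}_{i^*}^{max}$, where the two-sided factor accounts for leaving $p_{i^*}$ and returning to it and the trailing term absorbs the final return. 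Combining these two facts gives the second inequality, and substituting the block-count bound derived below gives the third.

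The heart of the argument is $\mathbb{E}[N_i(T)]\le C_1^i\log T+C_2^i$, which I would obtain by a UCL-type argument evaluated at the block-start instants $\tau_{fk}$. Writing $\beta_t:=\Phi^{-1}(1-1/(\sqrt{2\pi e}\,t^2))$ with $\beta_t^2\le 4\log t+2\log\sqrt{2\pi e}$, arm $i$ is chosen at $\tau$ only if $Q^{\text{CBPA}}_i(\tau)\ge Q^{\text{CBPA}}_{i^*}(\tau)$, an event I split into (a) the optimum being underestimated, (b) arm $i$ being overestimated, and (c) arm $i$ being under-sampled, $N_i(\tau)\le n_i^{thr}$. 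For (c), enforcing $2\hat{\sigma}_i(\tau)\beta_\tau<\Delta_i$ with $\hat{\sigma}_i(\tau)=\sigma_0/\sqrt{N_i(\tau)+1}$ yields a deterministic threshold whose leading term $\tfrac{16\sigma_0^2}{\Delta_i^2}\log T$ is the first part of $C_1^i$ and whose $\log\sqrt{2\pi e}$ piece matches the $\tfrac{4\sigma_0^2}{\Delta_i^2}\log\sqrt{2\pi e}$ term of $C_2^i$. Events (a)--(b) require a tail bound on the posterior mean $\hat{\bm\mu}_t=(\Lambda_0+P(t)^{-1})^{-1}(P(t)^{-1}\bar{\mathbf r}_t+\Lambda_0\bm\mu_0)$ of \eqref{mu}, which is where the two distinctive features of CBPA appear. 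This estimator is \emph{biased} toward the prior mean $\bm\mu_0$; bounding the bias by $M_i$ (which vanishes for an exact prior) and passing it through the Gaussian tail inflates each failure probability by a factor $e^{O(M_i^2/\sigma_0^2)}$, producing the $e^{3M^2/2\sigma_0^2}$ and $e^{M^2/3\sigma_0^2}$ contributions to $C_1^i$ and $C_2^i$. The correlation enters through the effective noise level: conditioning arm $i$ on the remaining arms replaces $\sigma_0^2$ by $\sigma_{i-cond}^2=\sigma_0^2-\sigma_i(0)\Sigma_{\sim i}^{-1}(0)\sigma_i^{T}(0)$, and the ratio $\delta_i^2=\sigma_0^2/\sigma_{i-cond}^2$ rescales the tail, matching its role inside $M_i$. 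Summing the per-step failure probabilities of (a)--(b) over $\tau$ yields the convergent series $\sum_\tau\tau^{-2}=\pi^2/6$, collected into $C_2^i$.

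It remains to bound $\mathbb{E}[S_i(T)]$, the number of blocks assigned to suboptimal arm $i$. A block selecting $i$ is caused either by under-sampling or by a concentration failure, so I would count the two separately, frame by frame, across the $l=\lceil\sqrt{\log_2 T}\rceil$ frames. Within frame $f$ the blocks have length $f$ and the block-start times obey $\log\tau_{fk}\le\log L_f\approx f^2\log 2$; hence the increment of the sampling threshold over frame $f$ is $\approx C_1^i\log 2\,(f^2-(f-1)^2)$, and dividing by the block length $f$ shows that the number of under-sampled arm-$i$ blocks per frame is $O((\log 2)C_1^i)$, a quantity that does \emph{not} grow with $f$. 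Summing this constant-per-frame count over the $l$ frames gives the leading term $(\log 2)C_1^i\sqrt{\log_2 T}$, while the concentration-failure blocks, being summable in $\tau$, contribute the residual $(C_2^i+\log 2\,C_1^i)(1+\pi^2/6)$; this confirms the sub-logarithmic growth that keeps the switching cost dominated by the $\mathcal{O}(\log T)$ exploration regret. Assembling the bounds on $\mathbb{E}[N_i(T)]$ and $\mathbb{E}[S_i(T)]$ with the switching-cost decomposition yields the stated inequality.

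I expect the main obstacle to be the concentration step for events (a)--(b): the posterior mean is simultaneously biased (from the Bayesian prior) and cross-coupled across arms (through $\Sigma_0$), so neither a textbook sub-Gaussian bound on an empirical mean nor a standard UCB peeling argument applies directly. One must bound the prior-induced bias uniformly by $M_i$, track how conditioning on the other arms rescales the variance to $\sigma_{i-cond}^2$, and fuse the two inside a single Gaussian tail estimate, which is exactly what generates the coupled constants $C_1^i,C_2^i$. A secondary difficulty is the telescoping frame-by-frame count that converts the $\mathcal{O}(\log T)$ sample budget into an $\mathcal{O}(\sqrt{\log_2 T})$ block budget with the correct constants.
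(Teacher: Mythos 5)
Your proposal follows essentially the same route as the paper---the same regret and switching-cost decompositions, the same three-event UCL argument evaluated at block starts, and the same use of the prior bias $M_i$ and conditional variance $\sigma_{i-cond}^2$---but two steps in the middle have genuine gaps. The first concerns your accounting of the concentration failures (your events (a) and (b)), which ignores the block-length weighting and consequently cannot produce the stated $C_1^i$. In the block scheme, a failure at a block start $\tau_{fk}$ commits the suboptimal power for the whole block of length $f$, so the quantity to control is $\sum_{f}\sum_{k} f\,\mathbb{P}\{\text{failure at }\tau_{fk}\}$, not a per-step sum. Moreover, once the prior-induced bias is absorbed into the Gaussian tail via $(a-b)^2\geqslant \tfrac{3}{4}a^2-3b^2$, the tail becomes $\tfrac{1}{2} e^{3M_{i^*}^2/(2\sigma_0^2)}\tau_{fk}^{-9/8}$ rather than $\tau_{fk}^{-2}$; the paper then bounds $\sum_{f,k}f\,\tau_{fk}^{-9/8}\leqslant \log 2T$, and it is precisely this logarithmic growth of the \emph{weighted} failure sum that places $\tfrac{\log 2}{2}\bigl(e^{3M_{i^*}^2/(2\sigma_0^2)}+e^{3M_i^2/(2\sigma_0^2)}\bigr)$ inside the $\log T$ coefficient $C_1^i$. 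Your claim that these failures sum to $\sum_\tau \tau^{-2}=\pi^2/6$ and are ``collected into $C_2^i$'' contradicts your own earlier (correct) remark that the exponential factors contribute to $C_1^i$, and under that accounting the theorem's $C_1^i$ would never materialize.

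The second gap is in your bound on $\mathbb{E}[S_i(T)]$: you estimate the number of arm-$i$ blocks in frame $f$ by the increment $N_i(L_f)-N_i(L_{f-1})$, which you in turn bound by the increment of the cumulative bound, roughly $C_1^i\log 2\,\bigl(f^2-(f-1)^2\bigr)$. This step is invalid, because only cumulative upper bounds on $\mathbb{E}[N_i(L_f)]$ are available, and a difference of upper bounds is not an upper bound on the difference---the selections of arm $i$ could concentrate entirely in a single frame. The paper avoids this with Abel summation, following Agrawal's block-allocation argument: $S_i(T)\leqslant N_i(L_l)/l+\sum_{f=1}^{l-1}N_i(L_f)\bigl(1/f-1/(f+1)\bigr)$, in which every $N_i(L_f)$ enters with a positive coefficient, so the cumulative bounds can be substituted term by term to obtain $\mathbb{E}[S_i(T)]\leqslant \log 2\, C_1^i\sqrt{\log_2 T}+(C_2^i+\log 2\,C_1^i)(1+\pi^2/6)$. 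Both gaps are repairable, and repairing them essentially reproduces the paper's proof; the rest of your outline (the regret and switching-cost decompositions, the under-sampling threshold, and the role of $M_i$ and $\delta_i^2$) is consistent with the paper.
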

\begin{proof}
See Appendix~\ref{apd:proof1}.
\end{proof}

Theorem~\ref{thm:regretsc} provides an $\mathcal{O}(\log T)$ upper bound for CBPA-SC, which guarantees that its cumulative PIF will converge to that of the global optimum power value at a rate of $\mathcal{O}(\log T/T)$. Furthermore, this upper bound applies to any finite time $T$ and any general function of switching loss $f(|p_i-p_j|)$ as long as $f$ is a non-decreasing and finite function.

Theorem~\ref{thm:regretsc} is a powerful result as it gives an $\mathcal{O}(\log T)$ bound for the most general algorithm CBPA-SC. We can now derive similar results for all the other proposed algorithms. First, when  $s_{ij}=0, \forall i, j \in\mathcal{N}_{pow}$, Theorem~\ref{thm:regretsc} can be applied to CBPA. Formally, we have the following corollary.

\begin{corollary}
\label{cor:cbpareg}
The expected cumulative PIF loss $\mathbb{E}[R_{T}]$ of CBPA is bounded above as:
\begin{equation*}
\mathbb{E}[R_{T}] \leq \sum\limits_{i=1,i\neq i^*}\limits^n\Delta_i\left(\lceil \frac{4\sigma_{0}^2}{\Delta_{i}^2}(\log{2\pi e}+4\log{T})-1 \rceil+\hat{N}_{i}\right),
\end{equation*}
where
\begin{equation*}
\hat{N}_{i}= e^{\frac{M_{i^*}^2}{3\sigma_0^2}}+e^{\frac{M_{i}^2}{3\sigma_0^2}}+\frac{9}{2}\left(e^{\frac{3M_{i^*}^2}{2\sigma_0^2}}+e^{\frac{3M_{i}^2}{2\sigma_0^2}}\right).
\end{equation*}
\end{corollary}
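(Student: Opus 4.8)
The plan is to derive Corollary~\ref{cor:cbpareg} as the zero-switching-cost specialization of Theorem~\ref{thm:regretsc}. Since the expected PIF loss of CBPA is given exactly by the regret decomposition $\mathbb{E}[R_T]=\sum_{i=1,i\neq i^*}^n\Delta_i\mathbb{E}[N_i(T)]$ in \eqref{eqn:regdelta}, the whole task reduces to producing a finite-time upper bound on the expected selection count $\mathbb{E}[N_i(T)]$ of each suboptimal power level $p_i$. When $s_{ij}=0$ for all $i,j\in\mathcal{N}_{pow}$, every maximum expected switching loss $\tilde{s}_i^{max}=\max_j\mathbb{E}[s_{ij}]$ is zero, so all switching-related terms in Theorem~\ref{thm:regretsc} vanish; moreover the block-allocation machinery of Algorithm~\ref{alg:scUCL} is no longer needed and each time slot can be analyzed on its own rather than grouped into blocks. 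Carrying out the counting argument per slot instead of per block is precisely what trades the looser blocked constants $C_1^i,C_2^i$ for the cleaner quantities $\ell_i:=\lceil \frac{4\sigma_0^2}{\Delta_i^2}(\log 2\pi e+4\log T)-1\rceil$ and $\hat{N}_i$; note the two forms already agree to leading order, since $\Delta_i\cdot\frac{4\sigma_0^2}{\Delta_i^2}\cdot 4=\frac{16\sigma_0^2}{\Delta_i}$ matches the $\log T$ coefficient of $C_1^i$.

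The core step is to bound $\mathbb{E}[N_i(T)]$. Writing $\beta_t=\Phi^{-1}(1-1/(\sqrt{2\pi e}t^2))$ for the UCL quantile, I would first observe that selecting the suboptimal arm, $Q_i^{\text{CBPA}}(t)\geq Q_{i^*}^{\text{CBPA}}(t)$, forces at least one of three events: (i) the credible limit of the optimal arm underestimates its mean; (ii) the estimate $\hat{\mu}_i(t)$ of arm $i$ overshoots $\mu_i$ by more than its confidence half-width; or (iii) the arm has been sampled too few times, $N_i(t)<\ell_i$. The threshold $\ell_i$ is chosen exactly so that once $N_i(t)\geq\ell_i$ the half-width $\hat{\sigma}_i(t)\sqrt{\sum_j\rho_{ij}^2(t)}\,\beta_t$ drops below $\Delta_i/2$, using the standard estimate of the form $\beta_t^2\leq\log(2\pi e)+4\log t$ for this particular quantile (the residual constant being absorbed into the ceiling). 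Event (iii) then contributes at most $\ell_i$, while the contributions of events (i) and (ii) are the sums over $t\geq 1$ of the corresponding tail probabilities, which form convergent series whose total is $\hat{N}_i$. Substituting $\mathbb{E}[N_i(T)]\leq\ell_i+\hat{N}_i$ into \eqref{eqn:regdelta} yields the stated bound.

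The main obstacle will be the concentration analysis of the correlated, prior-biased estimator $\hat{\bm{\mu}}_t=(\Lambda_0+P(t)^{-1})^{-1}(P(t)^{-1}\bar{\mathbf{r}}_t+\Lambda_0\bm{\mu}_0)$, which, unlike a plain sample mean, couples observations across all arms through the inverse covariance and is additionally biased by the mismatch between the prior mean $\bm{\mu}_0$ and the true means. Controlling events (i) and (ii) therefore requires two ingredients: first, bounding the effective half-width by $\frac{\sigma_0}{\sqrt{N_i(t)}}\beta_t$ by combining the conditional-variance control $\hat{\sigma}_i^2(t)\leq\sigma_{i\text{-}cond}^2/N_i(t)$ with a bound on the correlation factor $\sqrt{\sum_j\rho_{ij}^2(t)}$, which is where $\delta_i^2=\sigma_0^2/\sigma_{i\text{-}cond}^2$ enters so that the off-diagonal coupling does not inflate the width; and second, absorbing the prior bias, quantified by $M_i=\sigma_0^2\sqrt{1+\delta_i^2}\sum_{j,k}|\lambda_{kj}^0||\mu_j^0-\mu_j|$, into a Gaussian tail bound by completing the square. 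This last change-of-measure step is exactly what produces the bias-dependent factors $e^{M_i^2/3\sigma_0^2}$ and $e^{3M_i^2/2\sigma_0^2}$ (and their $i^*$ analogues) in $\hat{N}_i$, after which summing the remaining $1/t$-type series supplies the numerical constant $9/2$. Since all of this machinery is already established in the proof of Theorem~\ref{thm:regretsc} in Appendix~\ref{apd:proof1}, the corollary follows by setting $s_{ij}=0$, collecting the per-slot counts into $\ell_i+\hat{N}_i$, and substituting into \eqref{eqn:regdelta}.
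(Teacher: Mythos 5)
Your proposal is correct and follows essentially the same route as the paper's own proof: the paper also obtains the corollary by re-running the Theorem~\ref{thm:regretsc} counting argument per slot with $s_{ij}=0$, bounding event (iii) by $\eta_i=\lceil \frac{4\sigma_{0}^2}{\Delta_{i}^2}(\log{2\pi e}+4\log{T})-1 \rceil$ and the contributions of events (i)--(ii) by $\tau_1+\tau_2+\frac{1}{2}\bigl(e^{3M_{i^*}^2/2\sigma_0^2}+e^{3M_{i}^2/2\sigma_0^2}\bigr)\sum_{t=1}^{T}t^{-9/8}\leq \hat{N}_i$, and then substituting $\mathbb{E}[N_i(T)]$ into \eqref{eqn:regdelta}. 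The only cosmetic discrepancy is bookkeeping: in the paper the factors $e^{M_{i^*}^2/3\sigma_0^2}$ and $e^{M_{i}^2/3\sigma_0^2}$ enter as the time thresholds $\tau_1,\tau_2$ (initial slots counted in full before the Gaussian tail bounds become valid), rather than as part of the tail-probability series itself.
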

\begin{proof}
See Appendix~\ref{apd:proof2}.
\end{proof}
As Corollary~\ref{cor:cbpareg} shows, the $\mathcal{O}(\log T)$  upper bound of the cumulative PIF loss still holds for the CBPA algorithm. Thus, adding switching cost into the problem does not change the optimal scaling of the cumulative PIF loss. However, the algorithm that deals with the switching cost (CBPA-SC) is considerably more complicated than the one without the switching cost (CBPA).

Next, we note that the difference between BPA and CBPA lies in the correlation structure. We can further remove the correlation component in Corollary~\ref{cor:cbpareg} to analyze BPA.

\begin{corollary}
\label{cor:cpareg}
The expected cumulative PIF loss $\mathbb{E}[R_{T}]$ of BPA is bounded above as:
\begin{eqnarray*}
\mathbb{E}[R_{T}] &\leq& \sum\limits_{i=1,i\neq i^*}\limits^n\Delta_i\Big(\lceil \frac{4\sigma_{0}^2}{\Delta_{i}^2}(\log{2\pi e}+4\log{T})-1 \rceil \\
&&+e^{\frac{\Delta m_{i^*}^2}{3\sigma_0^2}}+e^{\frac{\Delta m_{i}^2}{3\sigma_0^2}}+\frac{9}{2}e^{\frac{3\Delta m_{i^*}^2}{2\sigma_0^2}}+\frac{9}{2}e^{\frac{3\Delta m_{i}^2}{2\sigma_0^2}}\Big),
\end{eqnarray*}
where $\Delta m_i=\mu_i-\mu_i^0$ measures the accuracy of the prior knowledge of the mean PIF.
\end{corollary}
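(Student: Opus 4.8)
The plan is to obtain Corollary~\ref{cor:cpareg} by specializing the analysis of Corollary~\ref{cor:cbpareg} (equivalently, Theorem~\ref{thm:regretsc} with $s_{ij}=0$) to the \emph{uncorrelated} prior, and then showing that in this decoupled regime the prior-accuracy quantity $M_i$ sharpens to the single-arm error $\Delta m_i=\mu_i-\mu_i^0$. First I would set $\Sigma_0=\sigma_0^2 I$, as the paper prescribes for the independent setting, so that $\Lambda_0=\sigma_0^{-2}I$ is diagonal. Feeding this into the posterior update $\hat{\bm{\mu}}_t=(\Lambda_0+P(t)^{-1})^{-1}(P(t)^{-1}\bar{\mathbf{r}}_t+\Lambda_0\bm{\mu}_0)$ from \eqref{mu} makes $\Lambda_0+P(t)^{-1}$ diagonal, so the update decouples componentwise and collapses to the scalar BPA rule $\hat\mu_i(t)=(N_i(t)\bar r_i(t)+\mu_i^0)/(N_i(t)+1)$ of Algorithm~\ref{alg:BPA}. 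The correlation factor $\sqrt{\sum_j\rho_{ij}^2(t)}$ in the utility reduces to $1$, recovering the BPA utility of step~\ref{alg:Qbpa}, so the two algorithms are provably identical on a diagonal prior.

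The key step is to recompute the prior-induced bias, because one cannot simply substitute the diagonal values into the formula for $M_i$: doing so leaves a double sum of order $\sqrt2\sum_j|\mu_j^0-\mu_j|$ that over-counts the prior errors of \emph{all} arms. In the correlated proof this double sum arises because $\hat\mu_i$ is coupled, through the off-diagonal entries $\lambda_{kj}^0$ and the conditional-variance inflation $\sqrt{1+\delta_i^2}$, to the observations and priors of every arm. Once $\Lambda_0$ is diagonal these cross terms vanish ($\lambda_{kj}^0=0$ for $k\neq j$) and the conditioning is trivial, so $\hat\mu_i$ depends only on arm $i$. Concretely, for fixed $N_i(t)$ one has $\mathbb{E}[\hat\mu_i(t)]=\mu_i-\Delta m_i/(N_i(t)+1)$, a bias driven \emph{solely} by $\Delta m_i$; this is the precise sense in which $M_i$ is replaced by $\Delta m_i$ (and $M_{i^*}$ by $\Delta m_{i^*}$).

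I would then re-run the two bad-event bounds that generate $\hat N_i$. Bounding $\mathbb{E}[N_i(T)]$ decomposes, as in the proof of Corollary~\ref{cor:cbpareg}, into a deterministic count of plays needed for the confidence width $\hat\sigma_i\Phi^{-1}(1-1/(\sqrt{2\pi e}t^2))$ to fall below $\Delta_i/2$, plus the summed probabilities of (i) the optimal arm being underestimated and (ii) arm $i$ being overestimated. The deterministic count yields $\lceil\frac{4\sigma_0^2}{\Delta_i^2}(\log 2\pi e+4\log T)-1\rceil$ and is untouched by the prior because it depends only on $\sigma_0$, $\Delta_i$ and $t$. The two tail probabilities are Gaussian deviations of the single-arm posterior mean whose centre is shifted by $\Delta m_i$ (resp.\ $\Delta m_{i^*}$); carrying $\Delta m$ through the same subexponential estimates and time summations that produced the $1/3$ and $3/2$ exponents and the prefactor $9/2$ in $\hat N_i$ gives exactly $e^{\Delta m_{i^*}^2/(3\sigma_0^2)}+e^{\Delta m_i^2/(3\sigma_0^2)}+\frac92 e^{3\Delta m_{i^*}^2/(2\sigma_0^2)}+\frac92 e^{3\Delta m_i^2/(2\sigma_0^2)}$. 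Summing $\Delta_i$ times this per-arm count over $i\neq i^*$ assembles the stated bound.

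The main obstacle is precisely the temptation of the naive substitution: the bias term must be re-derived in the decoupled setting rather than read off from $M_i$, and one must verify that the exponent constants ($1/3$ and $3/2$) and the summation prefactor ($9/2$) survive the specialization unchanged, i.e.\ that removing correlation tightens the prior-error contribution from the worst-case double sum to the single-arm $\Delta m_i$ without otherwise altering the structure of the tail estimates.
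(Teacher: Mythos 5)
Your proposal is correct and takes essentially the same route as the paper's own proof (Appendix C): the paper likewise reuses the event decomposition and inequalities \eqref{eqn:NiT}, \eqref{equ:1}--\eqref{equ:3} with the width $U_i(t)=\frac{\sigma_0}{\sqrt{1+N_i(t)}}\Phi^{-1}\bigl(1-1/(\sqrt{2\pi e}\,t^2)\bigr)$, replaces the correlated bias quantity $M_i$ by re-deriving the scalar posterior law $\hat{\mu}_i(t)\sim\mathcal{N}\bigl(\tfrac{\mu_i^0+N_i(t)\mu_i}{1+N_i(t)},\tfrac{N_i(t)\sigma_0^2}{(1+N_i(t))^2}\bigr)$ (i.e.\ bias $-\Delta m_i/(N_i(t)+1)$, exactly your key step), and then repeats the tail bounds and the summation $\sum_{t=1}^T t^{-9/8}\leqslant 9$ to obtain the thresholds $e^{\Delta m^2/(3\sigma_0^2)}$ and the $\tfrac{9}{2}e^{3\Delta m^2/(2\sigma_0^2)}$ terms. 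Your observation that naive substitution into $M_i$ would over-count the prior errors of all arms is a correct and worthwhile clarification, but it does not change the argument's structure relative to the paper.
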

\begin{proof}
See Appendix~\ref{apd:proof3}.
\end{proof}

Finally, because the UiPA algorithm does not use any prior knowledge, its utility function $Q_i^{UiPA}(t)$ is similar to the \textit{UCB1-NORMAL} algorithm in \cite{Auer:02}. Thus, the upper bound of the expected PIF loss can be derived analogously.

\begin{theorem}
\label{thm:uipareg}
The expected cumulative PIF loss $\mathbb{E}[R_{T}]$ of UiPA is bounded above as:
\begin{eqnarray*}
\mathbb{E}[R_{T}] &\leq& \sum\limits_{i=1,i\neq i^*}\limits^n\Delta_i\Big( \frac{16\sigma_0^2}{\Delta_i^2}(\log{2\pi e}+4\log T) \\
&&+((2\pi e)^{-1/4}+2)\log T+\frac{\log{2\pi e}}{2}+\frac{2}{\sqrt{2\pi e}} \Big),
\end{eqnarray*}
\end{theorem}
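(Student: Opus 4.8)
The plan is to reduce the regret to the expected number of suboptimal pulls and then bound the latter by adapting the \textit{UCB1-NORMAL} analysis of \cite{Auer:02}, to which the UiPA index $Q_i^{\text{UiPA}}(t)$ is structurally identical: an empirical-mean term $\bar{r}_i(t)$ plus an empirical standard-error factor (call it $\hat{\sigma}_i(t)$) scaled by a deterministic width. First I would invoke the decomposition \eqref{eqn:regdelta}, so that $\mathbb{E}[R_T]=\sum_{i\neq i^*}\Delta_i\mathbb{E}[N_i(T)]$, reducing the claim to showing that $\mathbb{E}[N_i(T)]$ is bounded by the bracketed quantity for each suboptimal arm $i$.

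The next step is to convert the quantile width into a logarithmic one. Writing $g(t)=\Phi^{-1}(1-1/(\sqrt{2\pi e}\,t^2))$, the Gaussian tail inequality $1-\Phi(x)\leq \tfrac12 e^{-x^2/2}$ applied at $x=g(t)$ yields $g(t)^2\leq \log(2\pi e)+4\log t$. Substituting this into the condition under which the confidence radius $\hat{\sigma}_i(t)\,g(t)$ is still wider than $\Delta_i/2$, and using that the sample variance concentrates around the true reward variance $\sigma_0^2$, fixes the threshold $\ell_i=\frac{16\sigma_{0}^2}{\Delta_{i}^2}(\log(2\pi e)+4\log T)$: once $N_i(t)>\ell_i$, arm $i$ can no longer be selected unless one of the concentration events below fails. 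This term is exactly the leading summand in the claimed bound.

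I would then carry out the standard UCB event decomposition. On the event $\{a(t)=i,\,N_i(t)>\ell_i\}$, at least one of the following must hold: (i) the optimal arm's index underestimates its mean, $\bar{r}_{i^*}(t)+\hat{\sigma}_{i^*}(t)g(t)\leq\mu^*$; or (ii) arm $i$'s empirical mean overshoots, $\bar{r}_i(t)\geq\mu_i+\hat{\sigma}_i(t)g(t)$. Under the Gaussian reward model the normalized deviations $(\bar{r}_i(t)-\mu_i)/\hat{\sigma}_i(t)$ are $t$-type statistics, so I would bound $\mathbb{P}(\text{(i)})$ and $\mathbb{P}(\text{(ii)})$ by combining the Gaussian concentration of the sample mean with the chi-squared concentration of the sample-variance estimate, and then sum these per-slot probabilities over $t=1,\dots,T$. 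The fast-decaying $1/(\sqrt{2\pi e}\,t^2)$-type tails sum to the $\mathcal{O}(1)$ constants $\frac{\log(2\pi e)}{2}$ and $\frac{2}{\sqrt{2\pi e}}$, while the coarser tail controlling an underestimated sample variance contributes the additive $((2\pi e)^{-1/4}+2)\log T$ term; collecting all pieces gives the stated bound.

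The hard part will be controlling the sample-variance estimate, precisely the feature that distinguishes UiPA from BPA and CBPA: the confidence radius is built from the empirical $\hat{\sigma}_i(t)$ rather than from the known $\sigma_0$, so an \emph{under}estimate of the variance narrows the interval and can push the optimal arm's index below $\mu^*$. Bounding the probability of this event requires a chi-squared tail estimate and is where the Gaussian reward assumption is essential, and where both the extra $\log T$ additive term and the awkward constants $(2\pi e)^{-1/4}$ and $1/\sqrt{2\pi e}$ originate; this is the only genuinely new calculation beyond transcribing the \cite{Auer:02} argument to the present quantile-based width.
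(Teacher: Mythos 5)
Your plan follows the same route as the paper's proof: the regret decomposition \eqref{eqn:regdelta}, the conversion of the quantile $\Phi^{-1}(1-1/(\sqrt{2\pi e}\,t^2))$ into the width $\sqrt{\log 2\pi e+4\log t}$, the threshold $\ell_i=\frac{16\sigma_0^2}{\Delta_i^2}(\log 2\pi e+4\log T)$, and Student-$t$/chi-squared tail bounds adapted from the UCB1-NORMAL analysis of \cite{Auer:02}. However, there is a genuine gap in the execution. Your event decomposition lists only (i) the optimal arm's index underestimating $\mu^*$ and (ii) arm $i$'s empirical mean overshooting, and asserts that once $N_i(t)>\ell_i$ a selection of arm $i$ forces (i) or (ii). That implication is sound for BPA/CBPA, where the confidence width is a deterministic function of $N_i(t)$ and the threshold kills the third alternative outright, but it is false for UiPA: here the width $U_i(t)$ is built from the \emph{random} sample variance, so arm $i$ can be selected with $N_i(t)>\ell_i$ while both (i) and (ii) fail --- namely when arm $i$'s own sample variance is atypically large, so that $U_i(t)>\Delta_i/2$, i.e., $\mu^*<\mu_i+2U_i(t)$. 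This third event is precisely what the paper controls with the chi-squared bound \eqref{eqn:uipaP2}, and its slow $t^{-1}$ decay is the source of the $(2\pi e)^{-1/4}\log T$ correction in the statement.

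Relatedly, your last paragraph misassigns the chi-squared work: you identify the dangerous event as an \emph{under}estimate of the variance pushing the \emph{optimal} arm's index below $\mu^*$. That event is already subsumed in the $t$-statistic bound for (i) (the paper's \eqref{eqn:uipaP1}): the $t$-statistic handles mean and variance deviations jointly and yields a $t^{-2}$ tail, so no separate chi-squared estimate is needed there. What does need a separate chi-squared estimate is the \emph{over}estimate of the \emph{suboptimal} arm's variance, which keeps its interval wide. Carried out as written, your plan would bound a redundant event and leave the essential one unbounded. A smaller bookkeeping error: the terms $2\log T+\frac{\log 2\pi e}{2}$ in the final bound do not come from summing tails; they come from bounding $\max\bigl\{\frac{16\sigma_0^2}{\Delta_i^2},\frac{1}{2}\bigr\}(\log 2\pi e+4\log T)$ by the sum of its two arguments, the $\frac{1}{2}$ factor being the sample-size requirement $N_i(t)\geqslant\frac{1}{2}\log 2\pi e+2\log t$ under which \eqref{eqn:uipaP1} is valid (cf.\ \eqref{eqn:uipaN}); only the constant $\frac{2}{\sqrt{2\pi e}}$ and the $(2\pi e)^{-1/4}\log T$ piece come from the summed $t^{-2}$ and $t^{-1}$ tails, respectively.
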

\begin{proof}
See Appendix~\ref{apd:proof4}.
\end{proof}

We can see that even though the constant terms in the upper bounds of CBPA and BPA may possibly be larger than the ones of UiPA, with a much smaller coefficient of $\log T$, the performance turns out to be better. Moreover, if the prior knowledge is accurate in BPA and CBPA, the upper bounds for both will become:
\begin{equation*}
\mathbb{E}[R_{T}] \leq \sum\limits_{\substack{i=1\\i\neq i^*}}\limits^n\Delta_i\left(\lceil \frac{4\sigma_{0}^2}{\Delta_{i}^2}(\log{2\pi e}+4\log{T})-1 \rceil+\frac{4}{\sqrt{2\pi e}}\right),
\end{equation*}
which can be easily derived from the corollaries.

\section{Reducing Complexity in Multi-SBS}
\label{sec:multi}

A practical problem in a multi-SBS deployment may arise due to the ``curse of dimensionality''. As the set of arms consists of the combinations of different power levels at all SBSs, it leads to $n^K$ arms and incurs exponential time and space complexity for the proposed algorithms. Plus, the number of available power levels for each SBS $n$ can be large. Note that in the CBPA and CBPA-SC algorithms, we need matrix calculations when updating the estimated state, which calls for $\mathcal{O}(n^{3K})$ time complexity and $\mathcal{O}(n^{2K})$ space complexity \cite{Knuth:97}. This severely limits the applicability of the proposed algorithms in large enterprise networks.

To reduce the complexity, we first explore a practical constraint that has not been utilized in the proposed algorithms. In real-world deployment, the neighboring SBSs are generally not allowed to have vastly different pilot power levels. This is because otherwise they may result in significantly different coverage areas and therefore lead to very uneven load distributions. Thus, utilizing this practical constraint, we  only need to consider the combinations of power levels in which neighboring SBS power levels are different by no more than a certain threshold $P_{th}$.

Even with the power difference threshold, the size of set is still exponential in $K$. To further reduce the complexity, we notice that the performance space of all set of arms exhibits certain ``clustering'' effect that can be utilized. For two power settings that differ only slightly (e.g., $\{0,3,5\}$ and $\{0,4,4\}$ dBm for $K=3$), the performances may be very similar. Thus, if we can carefully group the power settings into a few clusters, and only use the cluster center as the representative power setting, we can achieve a good tradeoff between complexity and performance for the  algorithms.

We propose to perform a clustering operation to address the complexity issue. The clustering operation is done after the self-configuration phase to leverage the prior knowledge, but before invoking the power assignment algorithm. We adopt the \textit{K-medoids} clustering \cite{Park:09} because, different from the well-known K-means clustering, K-medoids is based on the most central object instead of the centroids in K-means, each of which is the mean point of all objects in the cluster. Therefore, the medoids in each cluster can be seen as the representative power settings. We note that the choice of the number of clusters $N$ plays a critical role in the overall performance. If it is too large, the global optimum power setting may be a medoid with high probability, which contributes to high accuracy for the power assignment process but also increases the complexity and leads to low efficiency, and vice versa.

We further note that there is a $\mathcal{O}(n^KN)$ time complexity for K-medoids clustering \cite{Park:09}, but as clustering is done prior to the self-optimization phase, the process can be handled offline. Thus, time complexity is less of a concern.

\section{Simulation Results}
\label{sec:sim}

\begin{figure*}[t]
    \centerline{
        \subfigure[ MBS-SBS distance is 150m. ] {
        { \includegraphics[width=0.49\textwidth]{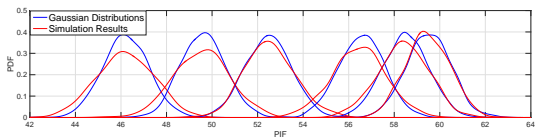} \label{simugaus2} } }
        \hfil
        \subfigure[ MBS-SBS distance is 70m. ] {
        { \includegraphics[width=0.42\textwidth]{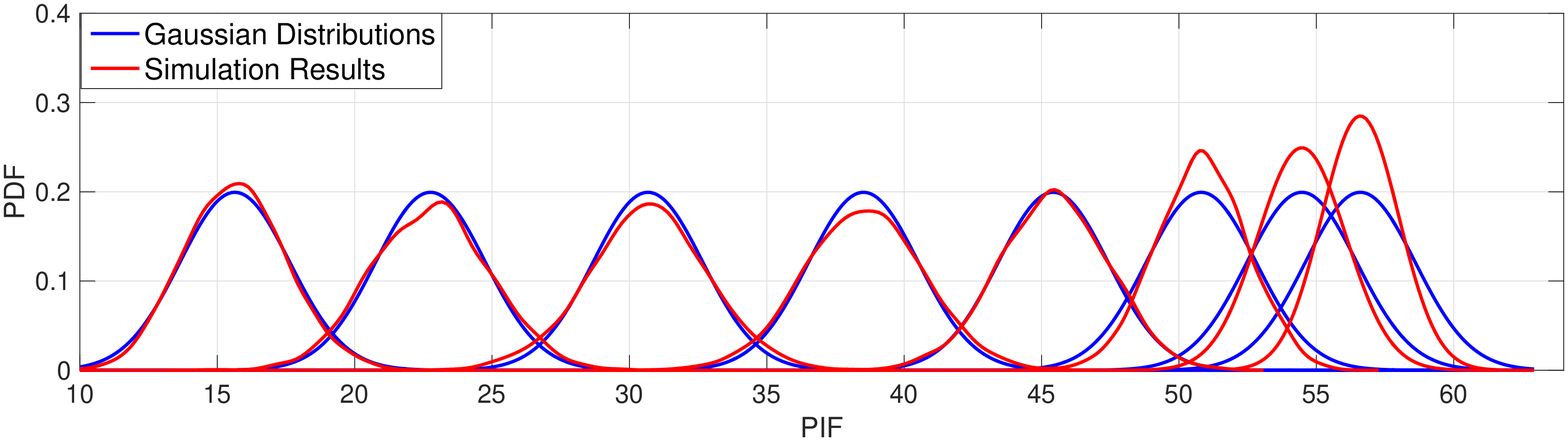} \label{simugaus3} }  }
        }
    \caption{A $40\times 30 m^2$ warehouse, with elliptic routes inside and outside.}
    \label{simugaus23}
\end{figure*}

\subsection{Simulation setup}

We resort to numerical simulations to verify the effectiveness of the developed transmit power assignment algorithms. A system-level heterogeneous network simulator is developed considering both indoor SBS and outdoor MBS. We consider a large warehouse with $K=1,2,4$ SBSs deployed inside and a MBS outside with a fixed transmit power setting. The measurement points constitute two routes inside and outside respectively which we assume to follow concentric circle or ellipse pattern. 100 measurement points are set uniformly on each route. At each time slot, the measurement points feedback their own coverage condition, determined by the respective SINR which is naturally decided by the current SBS power setting. We set the total time horizon as $T=3000$ slots and iterate each simulation setting for 50 times to average out the randomness. The size of the warehouse and the SBS locations are given in Table~\ref{tab:parameter}. Here we set the center of the warehouse as origin. The PIF $r$ under each power value can be calculated following the procedure in Sec.~\ref{sec:sys}.  

We obtain the received power from SBS or MBS using the indoor femto channel model of urban deployment from \cite{Pathloss} as follows.
\begin{itemize}
\item indoor UE to MBS:
\begin{equation}
\label{indoorMBS}
PL(d)[\text{dB}]=15.3+37.6\times\log_{10}(d)+L_{ow}+X_{\sigma_{dB}}, 
\end{equation}
\item outdoor UE to MBS:
\begin{equation}
\label{outdoorMBS}
PL(d)[\text{dB}]=15.3+37.6\times\log_{10}(d)+X_{\sigma_{dB}},
\end{equation}
\item indoor UE to SBS:
\begin{equation}
\label{indoorSBS}
PL(d)[\text{dB}]=38.46+20\times\log_{10}(d)+X_{\sigma'_{dB}}, 
\end{equation}
\item outdoor UE to SBS:
\begin{equation}
\label{outdoorSBS}
\begin{aligned}
PL(d)[&  \text{dB}]= \max\{  15.3+37.6\log_{10}(d),  \\
& 38.46+20\log_{10}(d)\}+L_{ow}+X_{\sigma'_{dB}}. 
\end{aligned}
\end{equation}
\end{itemize}
Note that \eqref{indoorMBS} and \eqref{indoorSBS} are for indoor routes while \eqref{outdoorMBS} and \eqref{outdoorSBS} are for outdoor routes; $d$ represents the separation between a BS and the measurement point; $L_{ow}$ is the penetration loss of an outdoor wall, which indoor user suffers when receiving power from outdoor MBS and outdoor user receiving from indoor SBS; $X_{\sigma_{dB}}$ and $X_{\sigma'_{dB}}$ stand for shadow fading. Other important simulation parameters are summarized in Table~\ref{tab:parameter}. 

\begin{table}[h]
\caption{Simulation Parameters}
\label{tab:parameter}
\begin{center}
\begin{tabular}{| l | l |}
\hline
\textbf{Parameters}      &  \textbf{Value} \\
\hline SBS transmit power &  [-10dBm, 20dBm] \\ 
\hline MBS transmit power   &  40dBm\\
\hline Thermal noise density    & -174dBm/Hz  \\
\hline Bandwidth      & 20MHz  \\
\hline Carrier frequency      &  2GHz\\
\hline Penetration loss ($L_{ow}$)      &  20dB \\ 
\hline Shadowing effect & \tabincell{l}{log-normal with \\ $\sigma=8\text{dB}$, $\sigma'=4\text{dB}$} \\
\hline  $d_0$ & 1m \\
\hline $\alpha$ & 0.7 \\
\hline
\multirow{3}{*}{}{Enterprise Size}& K=1 30m$\times$30m \\
 & K=2 40m$\times$40m   \\
 & K=4 50m$\times$40m  \\
\hline
\multirow{4}{*}{}{SBS location} & K=1 (12m,8m) \\
 &  K=2 (16m,17m), (-15m,-11m) \\
 & K=4  (20m,18m), (11m,-19m)\\
 & \quad (-11m,18.5m), (-10.5m,-19m)  \\
\hline
\end{tabular}
\end{center}
\end{table}

\subsection{Evaluation of the PIF Gaussian Distribution}

We first study the empirical distribution of the PIF $r$ in $K=1$ SBS with the set of power levels $\mathcal{P}=\{-10,-5,.., 15, 20\}$ dBm. We present the comparison of empirical and Gaussian distributions in two representative scenarios in  Fig.~\ref{simugaus2} and \ref{simugaus3}. As we can see, the assumption on Gaussian distributed PIFs matches well with the empirical distributions.

To further verify the dependency on the Gaussian distribution, we study the performance of the proposed algorithms compared with a well-behaved UCB extended algorithm which makes no assumptions on the distribution of the rewards, e.g. UCB-V in \cite{Audibert:08} under non-Gaussian reward distributions. More specifically, two well-adopted distributions in wireless communications, \textit{uniform} and \textit{Rayleigh}, are considered.  We can see from Fig.~\ref{nonGau} that performances under non-Gaussian distributions are still very good, particularly for BPA and CBPA. This observation indicates that Gaussianness is not a fundamental assumption that must be met to guarantee the effectiveness of the algorithms. 


\begin{figure}[h]
        \begin{center}
        \subfigure[ Uniform ]{
        \includegraphics[width=0.48\textwidth]{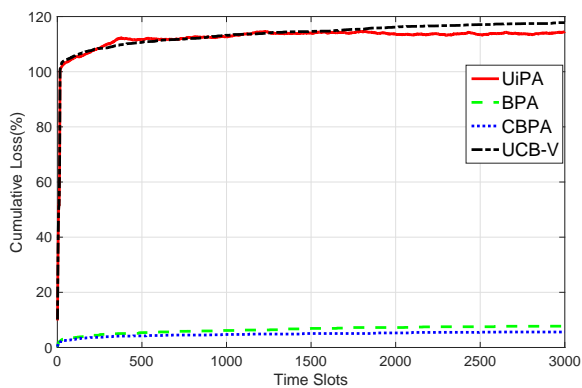} } 
        \end{center}
        \hfil
        \subfigure[ Rayleigh ]{
        \includegraphics[width=0.48\textwidth]{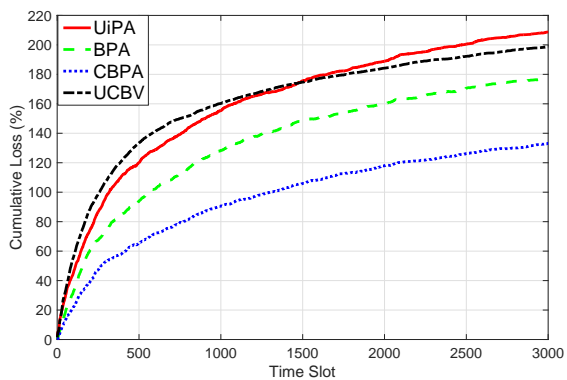}
        }
    \caption{Verifications of algorithms under non-Gaussian distributed rewards.}
    \label{nonGau}
\end{figure}

\subsection{System Performance}

In the simulation setting for $K=1$, we deploy an outside MBS at $[100m,100m]$. The set of power levels for SBS is $\mathcal{P}=\{-10,-8,..,18,20\}$ dBm while other settings follow Table~\ref{tab:parameter}. The inside and outside routes have the concentric circle pattern, whose radiuses are (2, 13) meters for the two indoor routes, and (24, 30) meters for the two outdoor routes. The cumulative loss over time is used to evaluate the performance, and we use the optimal power achieved by the global optimization of the expected PIF with complete RF information as the genie-aided optimum.


\begin{figure}[h]
        \begin{center}
        \subfigure[ Good quality (with estimated mean) ]{
        \includegraphics[width=0.48\textwidth]{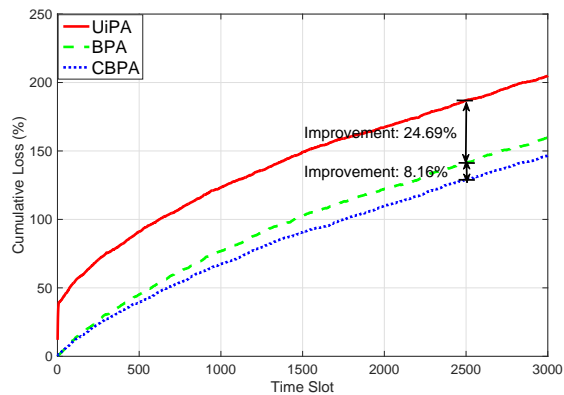} } 
        \end{center}
        \hfil
        \subfigure[ Poor quality (with uniform distribution) ]{
        \includegraphics[width=0.48\textwidth]{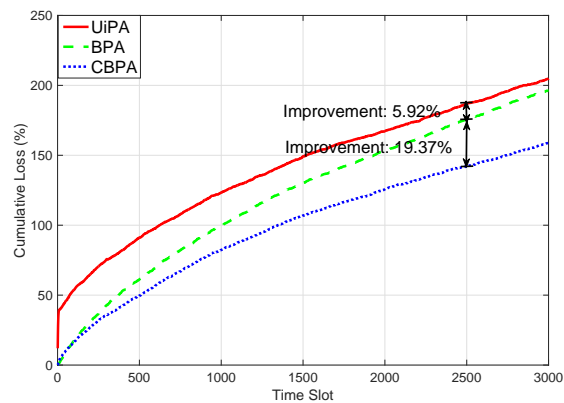}
        }
    \caption{Cumulative loss comparison of prior knowledge with different qualities in a single-SBS deployment with $\alpha=0.7$.}
    \label{wellpoor}
\end{figure}

We first compare the performance of UiPA, BPA and CBPA algorithms with different quality of priors. Fig.~\ref{wellpoor}(a) reports the cumulative loss over time for all three algorithms when the prior knowledge is of good quality, i.e. the estimated mean from the empirical distribution is used. Fig.~\ref{wellpoor}(b) shows the same simulation but with a poor prior knowledge, which uses a uniform prior distribution with each element $\mu_{0}=50$. A few important observations can be made from these simulations. First of all, we see that all three algorithms can converge to the optimal power value asymptotically, but with different speed. To further evaluate the convergence speed, we plot the empirical CDF of the convergence time for all three algorithms in Fig.~\ref{convergecdf}. It becomes clear that leveraging both the prior knowledge and the correlation structure significantly accelerates the convergence of CBPA. In terms of minimizing the total PIF loss, CBPA also outperforms BPA which performs better than UiPA. Second, degradation of the quality of the priors degrades the performance of {BPA} and {CBPA}. Particularly, performance of the {BPA} is getting close to {UiPA} with poor prior knowledge. It is interesting to note that even with poor prior, CBPA still converge faster than other algorithms with good prior. This is because when the prior knowledge is inaccurate, CBPA recovers some of the PIF degradation by leveraging its correlation structure. Lastly, as UiPA does not leverage the prior knowledge, changing its quality does not affect the convergence speed.

\begin{figure}[h]
\centerline{\includegraphics[width=0.48\textwidth]{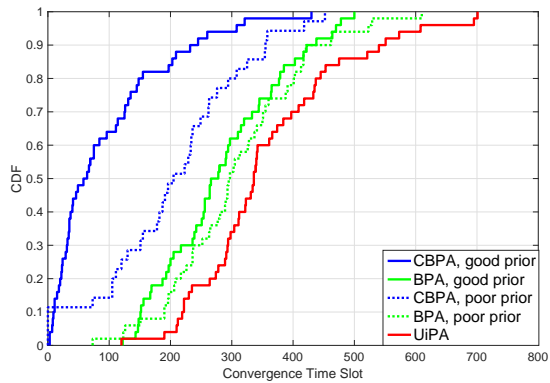}}
\caption{Convergence speed comparison of prior knowledge with different qualities in a single-SBS deployment with $\alpha=0.7$. ``Good prior'' corresponds to using the estimated mean while ``poor prior'' uses a uniform distribution.}
\label{convergecdf}
\end{figure}

Next, we compare the proposed algorithms with the industry solution. The heuristic solution \cite{Sumeeth} keeps a power value long enough to obtain a near-perfect PIF estimation, and then it either increases or decreases the power value by a fixed step size. Clearly, this method trades off fast convergence for certainty. Fig.~\ref{industry} reports the numerical comparison with a maximum $20$dBm and step size $2$dB. We can see that the industrial solution adapts poorly to different deployments, while our algorithms are stable thanks to  online learning. 


\begin{figure}[h]
        \begin{center}
        \subfigure[ Large warehouse ]{
        \includegraphics[width=0.48\textwidth]{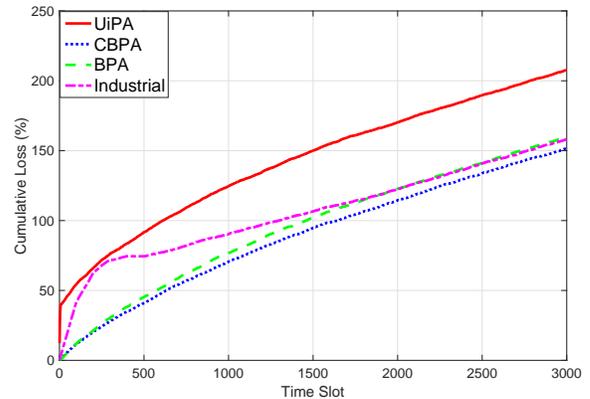} } 
        \end{center}
        \hfil
        \subfigure[ Small single-office enterprise ]{
        \includegraphics[width=0.48\textwidth]{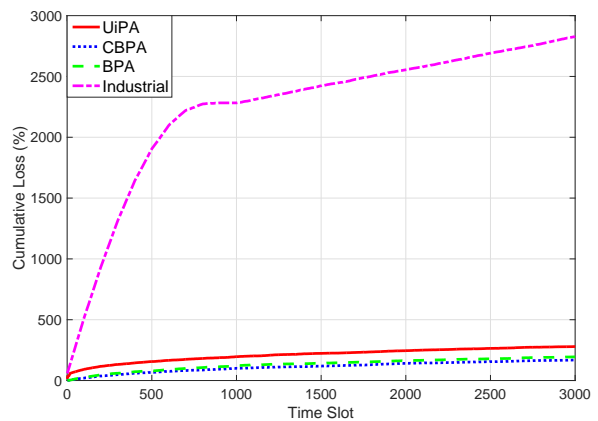} }
    \caption{Comparison of the industrial solution to UiPA, BPA and CBPA in different scenarios.}
    \label{industry}
\end{figure}

For $K=2$ and $K=4$, an outside MBS is deployed at $[100m,100m]$. The power value difference threshold is $P_{th}=5$dB. The power value for each SBS is selected from $\mathcal{P}=\{-10,-5,..,15,20\}$dBm. It results in $n=19$ for $K=2$ without any clustering, which may be acceptable in terms of complexity. The cumulative PIF loss with respect to the optimal power setting is shown in Fig.~\ref{douqua}(a). For $K=4$ case, however, there are $n=149$ power settings. We thus employ the clustering strategy in Sec.~\ref{sec:multi} and study two cases where the number of clusters is either $N=20$ or $N=40$. The PIF loss normalized by time is shown in Fig.~\ref{douqua}(b). We can see that all algorithms exhibit a decaying loss per slot. As for the effect of $N$, there exists an initial period when larger cluster number results in worse performance for all the algorithms. This is because during the initial slots, more power settings lead to more exploration and thus sub-optimal power settings are selected more. As time goes by, the algorithms have more knowledge about the optimal power setting. While a larger cluster number means one of the selected clustering medoids  is closer to the globally optimal power setting, a larger $N$ results in a better performance. Detailed coverage and leakage results under optimal selections are reported in Table \ref{tab:multi}.

\begin{table*}[ht]
\caption{Multi-SBS simulation results}
\label{tab:multi}
\begin{center}
\begin{tabular}{| l || c | c | c|}
\hline
\textbf{Metric}       &  \textbf{K=2}       &\textbf{K=4} \\
\hline
Globally optimal power [dBm]  & (0, 5)    & (0, 5, 10, 15)  \\
\hline
Coverage  percentage      &  91.506\%   & 96.548\% \\
\hline
Leakage  percentage      &  5.691\%   &  28.725\% \\
\hline
Simulation output power [dBm] &  (0, 5)   &   (-5, 0, 5, 10) when $N=20$ \\
&&  (0, 5, 10, 15) when $N=40$ \\
\hline
\end{tabular}
\end{center}
\end{table*}


\begin{figure}[h]
        \begin{center}
        \subfigure[ Two SBSs deployed in a $40\text{m} \times 40\text{m}$ enterprise ]{
        \includegraphics[width=0.48\textwidth]{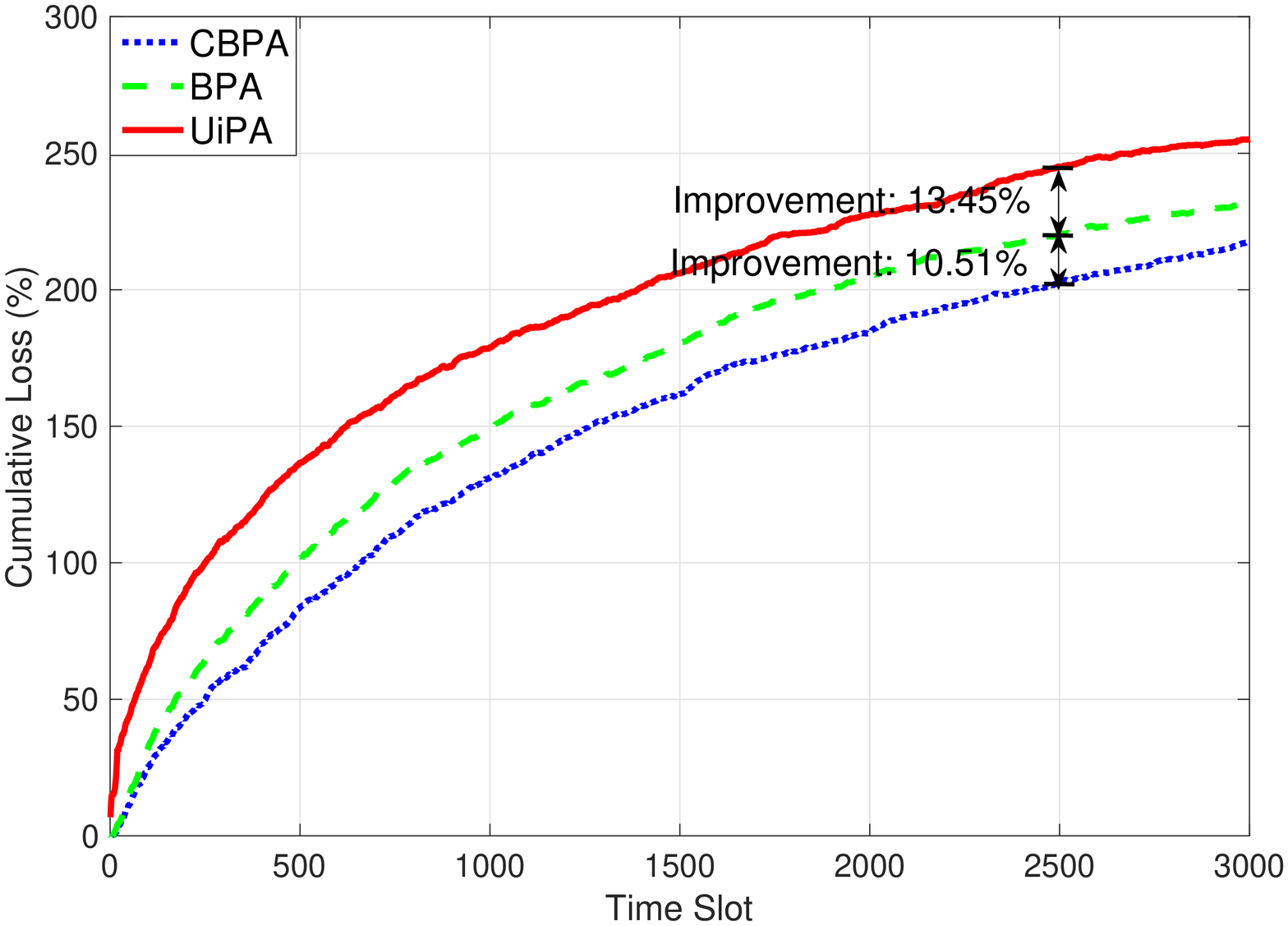} } 
        \end{center}
        \hfil
        \subfigure[ Four SBSs deployed in a $50\text{m} \times 40\text{m}$ enterprise ]{
        \includegraphics[width=0.48\textwidth]{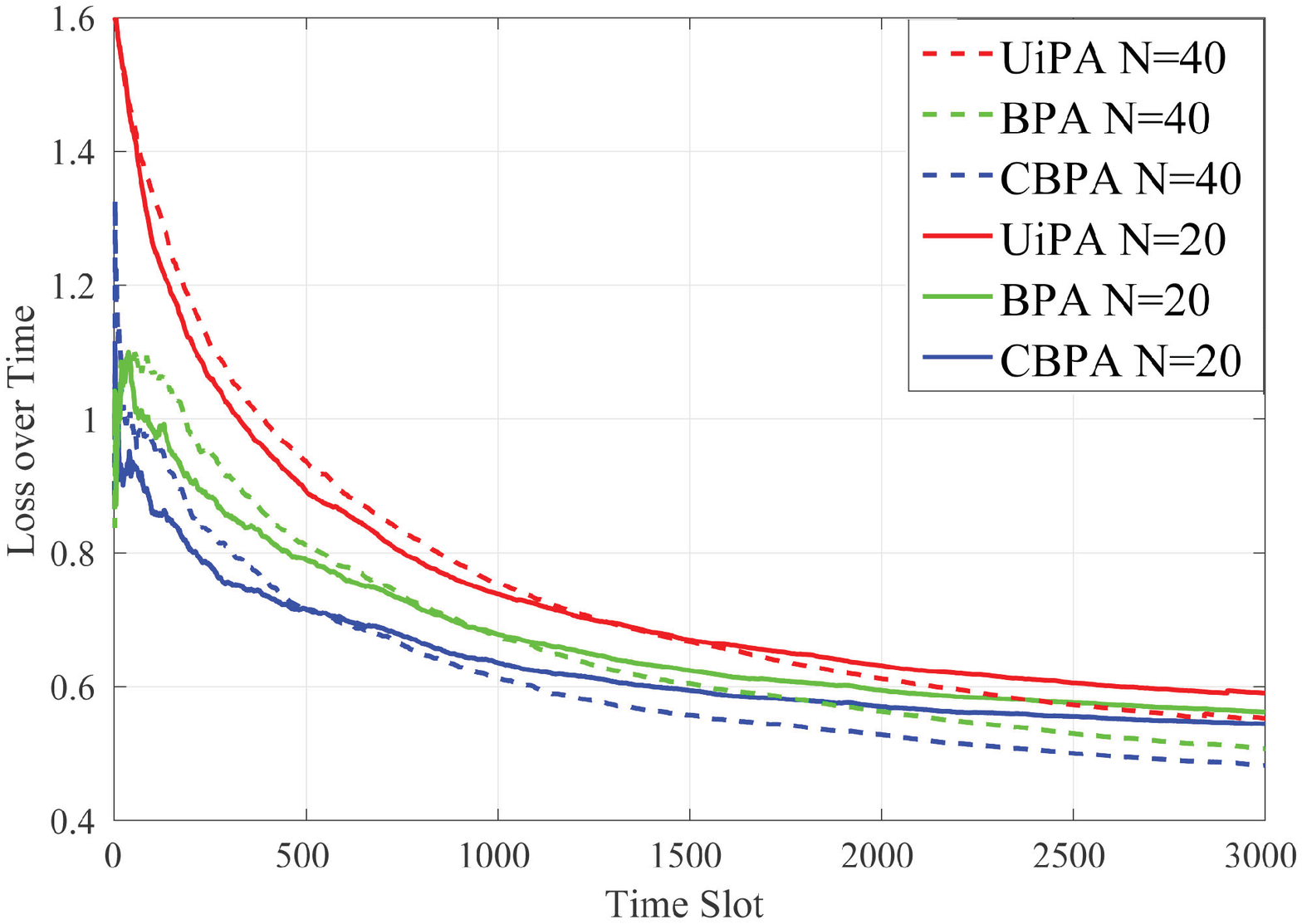} }
    \caption{Simulation results for two and four SBSs in different scenarios.}
    \label{douqua}
\end{figure}


Fig.~\ref{blockdou} and \ref{blockqua} study the impact of power switching cost for $K=2$ and $K=4$, respectively. Here we adopt a simple linear function of switching loss as $s_{ij}=\gamma|p_i-p_j|$, where $\gamma$ is a tunable parameter for different scenarios and we set as 0.2. We can see that the additional performance loss occurring whenever a SBS changes its power value increases the overall performance loss in all algorithms. However, the algorithms can still converge to the optimal power settings asymptotically in a sub-linear fashion, matching the regret analysis in Sec.~\ref{sec:alg}. In Fig.~\ref{blockqua}, the performances of different cluster numbers also comply with our previous analysis.


\begin{figure}[h]
        \begin{center}
        \subfigure[ Cumulative loss, $K=2$ ]{
        \includegraphics[width=0.48\textwidth]{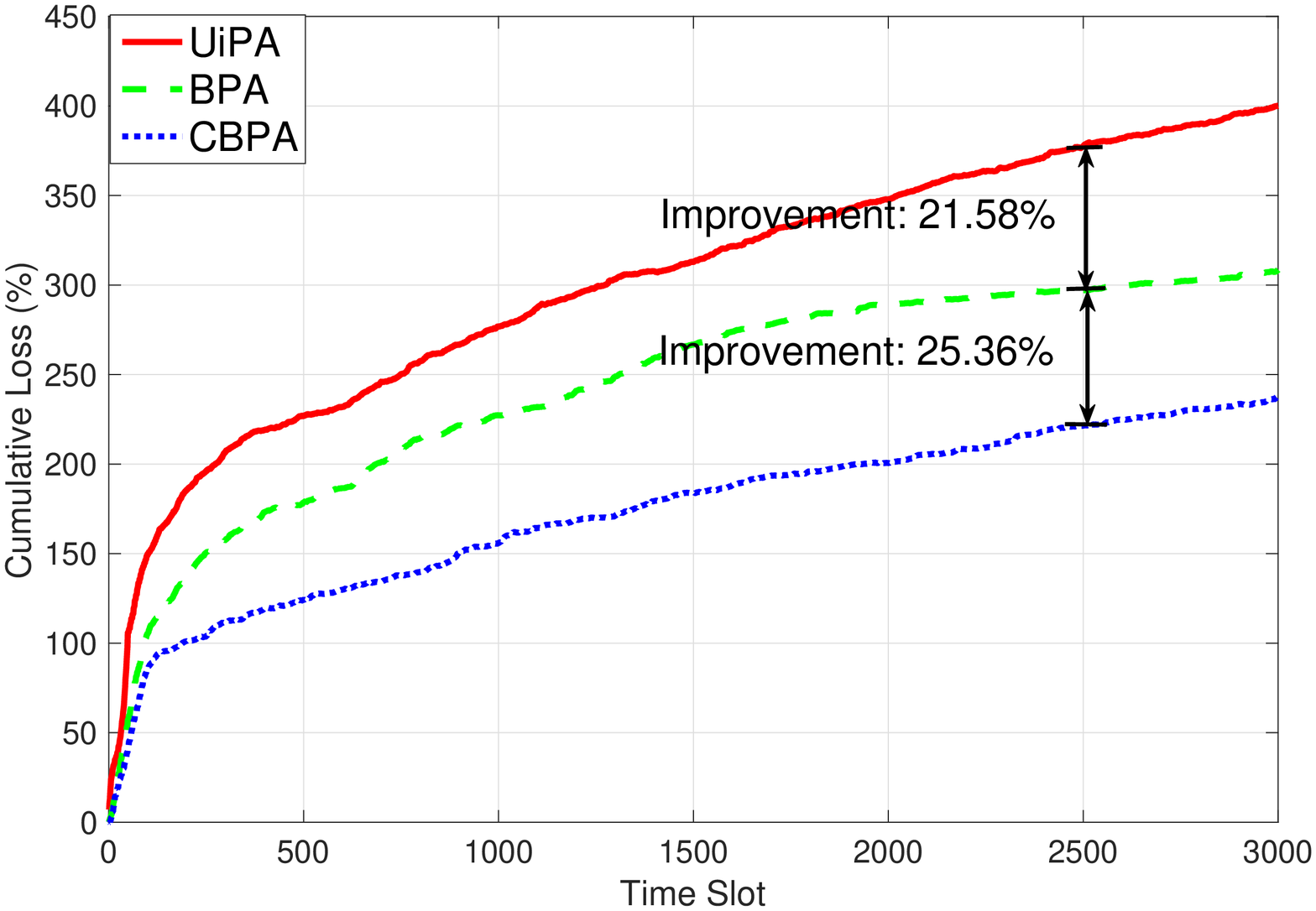} \label{blockdou} } 
        \end{center}
        \hfil
        \subfigure[ Per-slot loss, $K=4$ ]{
        \includegraphics[width=0.48\textwidth]{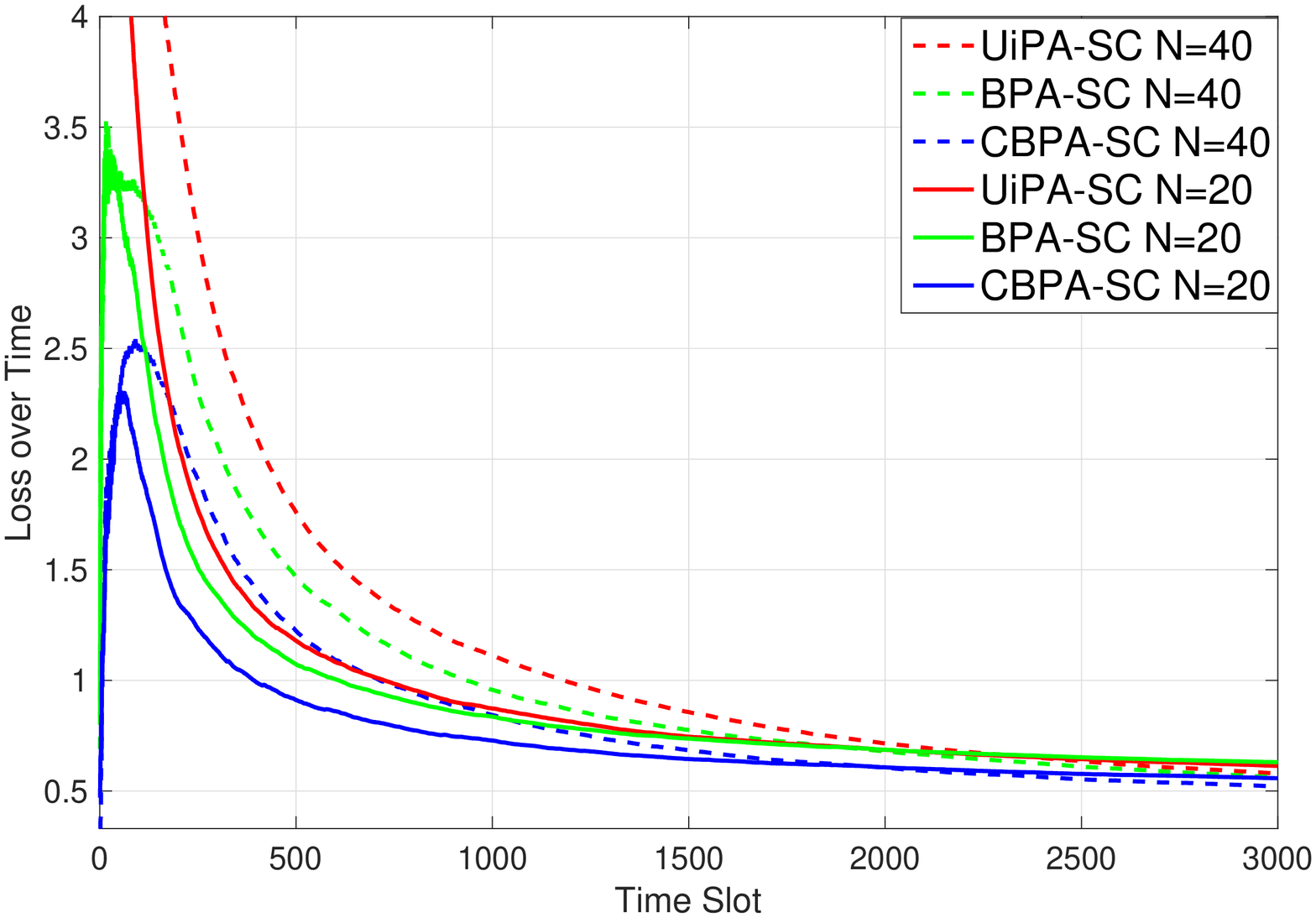} \label{blockqua}}
    \caption{Performance with switching cost factor $\gamma=0.2$.}
    \label{blockdouqua}
\end{figure}

\section{Conclusion}
\label{sec:conc}

We have studied the pilot power assignment problem associated with indoor enterprise closed-access SBS networks, in which the focus is on achieving optimal balance between providing sufficient coverage for the indoor users and suppressing leakage that causes interference to outdoor MBS users. We modeled power assignment as an online learning problem, and  adopted a Bayesian approach that leverages the prior information of the Gaussian distribution. We proposed bandit-inspired power assignment algorithms that utilize different levels of the statistical information. The CBPA algorithm makes use of both prior knowledge of the mean and variance of each arm as well as the dependency of PIFs across different power values. In contrast, the BPA algorithm only uses the prior knowledge but not the correlation information, and its performance is worse than CBPA but better than the UiPA algorithm that does not use either prior or correlation. Furthermore, we explicitly took into account the power switching cost, and enhanced the power assignment algorithms with a block allocation scheme to reduce frequent power-switchings. A sub-linear upper bound for performance loss was proved for all the algorithms. Furthermore, for the multi-SBS deployment, we proposed to use K-medoids clustering to reduce the complexity while maintaining the performance. When the cluster number becomes large, the algorithms can approach the globally optimal power setting for all $K$ SBSs. 

As a possible future direction, the \textit{spectral bandits} method proposed in \cite{Valko:14} offers a new perspective to efficiently handle a large number of arms while capturing the correlation structure. This can be an interesting alternative for the enterprise transmit power assignment problem. In particular, complexity and performance comparison with the algorithms of this paper may shed light into its feasibility.

\appendices

\section{Proof of Theorem \ref{thm:regretsc}}
\label{apd:proof1}

We start by proving for the case $L_{l-1}<T\leqslant L_l$. Note that
\begin{eqnarray}
\nonumber && N_i(T)=\sum\limits_{t=1}\limits^T\mathcal{I}(p_{a(t)}=i) \leqslant \sum\limits_{t=1}\limits^T\mathcal{I}(Q_i^t \geqslant Q_{i^*}^t)
\\     &\leqslant& \eta_i+\sum\limits_{t=1}\limits^{T}\mathcal{I}(Q_{i}^t \geqslant Q_{i^*}^t,N_{i}(t-1)\geqslant\eta_{i})\label{eqn:NiT}
\\    &\leqslant& \eta_i+\sum\limits_{f=1}\limits^ l\sum\limits_{k=1}\limits^{b_f}f \mathcal{I}(Q_{i}^t \geqslant Q_{i^*}^t,N_{i}(\tau_{fk})\geqslant\eta_{i}),\label{equ:nit}
\end{eqnarray}
where $i^*=\arg\max_{i=1,..,n}\mu_{i}$, $\eta_{i}$ is a positive integer, and $\mathcal{I}(x)$ is the indicator function. At any time $t$, sub-optimal $i$ is selected only when $Q_{i^*}^t \leqslant Q_{i}^t$, which is true as long as one of the following inequalities holds:
\begin{subequations}
\begin{equation}
\hat{\mu}_{i^*}(\tau_{fk})\leqslant \mu_{i^*}-U_{i^*}(\tau_{fk}), \label{equ:1}
\end{equation}
\begin{equation}
\hat{\mu}_{i}(\tau_{fk})\geqslant \mu_{i}+U_{i}(\tau_{fk}), \label{equ:2}
\end{equation}
\begin{equation}
\mu_{i^*}<\mu_{i}+2U_{i}(\tau_{fk}), \label{equ:3}
\end{equation}
\end{subequations}
where $U_{i}(\tau_{fk})=\hat{\sigma}_{i}(\tau_{fk})\sqrt{\sum\limits_{j=1}\limits^n{\rho_{ij}^2(\tau_{fk})}}\Phi^{-1}(1-1/\sqrt{2\pi e}\tau_{fk}^2)$.
Define the bias $\bm{e}$ and covariance $\bar{\Sigma}$ of the estimate $\hat{\bm{\mu}}(t)$, with $e_{i}$ and $\bar{\sigma}_{i}$ representing the $i$-th entry of $\bm{e}$ and the diagonal of $\bar{\Sigma}$, and we have $\hat{\bm{\mu}}(t)\sim \mathcal{N}(\bm{e}(t)+\bm{\mu},\bar{\Sigma}(t))$, with $e_{i}(t)=\sum_{j=1}^n\sum_{k=1}^n\hat{\sigma}_{ik}(t)\lambda_{kj}^0(\mu^0_j-\mu_j)$.

We now separately analyze \eqref{equ:1}, \eqref{equ:2}, and \eqref{equ:3}. First,  if $N_{i^*}(\tau_{fk})=0$, then \eqref{equ:1} is false if \cite[Lemma 7]{Srivastava:15}
\begin{equation*}
U_{i^*}(\tau_{fk})> \sigma_{i*-cond}\sqrt{3\log{\tau_{fk}}}\geqslant \frac{M_{i^*}}{\sqrt{1+\delta_{i^*}^2}}\geqslant|e_{i^*}(\tau_{fk})|
\end{equation*}
or equivalently, 
\begin{equation}
\label{equ:tau1}
\tau_{fk}>e^{\frac{M_{i^*}^2\delta_{i^*}^2}{3\sigma_0^2(1+\delta_{i^*}^2)}}.
\end{equation}
Otherwise, if $N_{i^*}(\tau_{fk})\geqslant1$, we have 
\begin{eqnarray}
&& \mathbb{P}\{\hat{\mu}_{i^*}(\tau_{fk}) \leqslant \mu_{i^*}-U_{i^*}(\tau_{fk})\}
\nonumber \\ 
& \leqslant & \mathbb{P}\left\{z\geqslant\Phi^{-1}(1-1/\sqrt{2\pi e}\tau_{fk}^2)-\frac{M_{i^*}}{\sigma_0}\right\} \nonumber  \\
& \leqslant & \mathbb{P}\left\{z\geqslant\sqrt{3\log{\tau_{fk}}}-\frac{M_{i^*}}{\sigma_0}\right\}, \label{eqn:P}
\end{eqnarray}
where $z$ is a standard Gaussian random variable. This  indicates that $\sqrt{3\log{\tau_{fk}}}-\frac{M_{i^*}}{\sigma_0}\geqslant0$. Thus we have $\tau_{fk}>e^{M_{i^*}^2/3\sigma_0^2}=\tau_1$. For $\tau_{fk}>\tau_1$, we have
\begin{eqnarray}
\nonumber\mathbb{P}\{\eqref{equ:1} \text{ holds}\}&\leqslant&\frac{1}{2}exp\left(-\frac{1}{2}\left( \sqrt{3\log{\tau_{fk}}}-\frac{M_{i^*}}{\sigma_0}\right)^2\right)
\\ &\leqslant&\frac{1}{2}exp\left(-\frac{1}{2}\left(\frac{9}{4}\log{\tau_{fk}}-3\frac{M_{i^*}^2}{\sigma_0^2}\right)\right) \nonumber \\
&=&\frac{1}{2}e^{\frac{3M_{i^*}^2}{2\sigma_0^2}}\tau_{fk}^{-\frac{9}{8}}.\label{equ:1pro} 
\end{eqnarray}
Inequality \eqref{equ:1pro} is deduced using \cite[Lemma 2]{Srivastava:15}.

Similarly, we can deduce that if $N_i(\tau_{fk})>\eta_i$ and $\tau_{fk}\geqslant \tau_2:=e^{M_i^2/3\sigma_0^2}$, then
\begin{eqnarray}
\label{equ:2pro}
\mathbb{P}\{\eqref{equ:2} \text{ holds}\}\leqslant\frac{1}{2}e^{\frac{3M_i^2}{2\sigma_0^2}}\tau_{fk}^{-\frac{9}{8}}.
\end{eqnarray}
For inequality \eqref{equ:3}, it holds if 
\begin{eqnarray}
\nonumber && \mu_{i^*}-\mu_i<2U_{i}(\tau_{fk}) \\
\nonumber&&  \Longrightarrow\Delta_i<\frac{2\sigma_{0}}{\sqrt{1+N_{i}(\tau_{fk})}}\Phi^{-1}(1-1/\sqrt{2\pi e}\tau_{fk}^2)
\\ &&  \nonumber \Longrightarrow N_i(\tau_{fk}) < \frac{4\sigma_{0}^2}{\Delta_{i}^2}(\log{2\pi e}+4\log{T})-1.
\end{eqnarray}
Thus we have that \eqref{equ:3} does not hold if
\begin{equation}
\label{eqn:Nitau}
N_{i}(\tau_{fk})\geqslant \frac{4\sigma_{0}^2}{\Delta_{i}^2}(\log{2\pi e}+4\log{T})-1.
\end{equation}
Setting $\eta_i=\lceil \frac{4\sigma_{0}^2}{\Delta_{i}^2}(\log{2\pi e}+4\log{T})-1 \rceil$ and combining \eqref{equ:tau1}, \eqref{equ:1pro} and \eqref{equ:2pro}, the inequality \eqref{equ:nit} can be written as
\begin{equation}
\begin{aligned}
\label{equ:reg}
 \mathbb{E}[N_i(T)]\leqslant \eta_i+\tau_1+\tau_2+\frac{1}{2}\left(e^{\frac{3M_{i^*}^2}{2\sigma_0^2}}+e^{\frac{3M_{i}^2}{2\sigma_0^2}}\right)\sum_{f=1}^l\sum_{k=1}^{b_f}f\tau_{fk}^{-\frac{9}{8}}.
\end{aligned}
\end{equation}
We now focus on $\sum_{f=1}^l\sum_{k=1}^{b_f}f\tau_{fk}^{-\frac{9}{8}}$. With $\tau_{fk}=L_{f-1}+1+(k-1)f$ and $2^{f^2}\leqslant L_{f} \leqslant 2^{f^2}+f^2$, we have
\begin{equation*}
\begin{aligned}
& \sum_{k=1}^{b_f}f\tau_{fk}^{-\frac{9}{8}} \leqslant   \sum_{k=1}^{b_f}f(2^{(f-1)^2}+(f-1)^2+1+(r-1)f)^{-9/8} \\ 
& \hspace{4em} \leqslant  \sum_{k=1}^{b_f}\frac{f}{2^{(f-1)^2}+(f-1)^2+1+(r-1)f}  \\ 
&\hspace{4em} \leqslant \int^{b_f}_1\frac{f}{2^{(f-1)^2}+(f-1)^2+1+(r-1)f}dr \\ 
&\hspace{4em} = \log\frac{2^{f^2}+(f-1)^2+1}{2^{(f-1)^2}+(f-1)^2+1}  \\ 
&\hspace{4em} \leqslant  \log\frac{2^{f^2}}{2^{(f-1)^2}},
\end{aligned}
\end{equation*}
and
\begin{equation*}
\sum_{f=1}^l\sum_{k=1}^{b_f}f\tau_{fk}^{-\frac{9}{8}} \leqslant \sum_{f=1}^l\log\frac{2^{f^2}}{2^{(f-1)^2}}=l^2\log2\leqslant \log 2T.
\end{equation*}
Therefore \eqref{equ:reg} yields
\begin{eqnarray*}
\mathbb{E}[N_i(T)]&\leqslant& \eta_i+\tau_1+\tau_2+\frac{1}{2}\left(e^{\frac{3M_{i^*}^2}{2\sigma_0^2}}+e^{\frac{3M_{i}^2}{2\sigma_0^2}}\right)\log 2T  \\
&\leqslant& C_1^i\log T+C_2^i,  \\ 
C_1^i&=&\frac{16\sigma_{0}^2}{\Delta_{i}^2}+\frac{\log 2}{2}\left(e^{\frac{3M_{i^*}^2}{2\sigma_0^2}}+e^{\frac{3M_{i}^2}{2\sigma_0^2}}\right),  \\
C_2^i&=&\frac{4\sigma_{0}^2}{\Delta_{i}^2}\log\sqrt{2\pi e}+\left(e^{\frac{M_{i^*}^2}{3\sigma_0^2}}+e^{\frac{M_{i}^2}{3\sigma_0^2}}\right).
\end{eqnarray*}

We then establish the expected number of switches to a sub-optimal arm $i$ from a different arm. We have
\begin{eqnarray*}
 S_i(T)&\leqslant& 1+\sum\limits_{f=1}\limits^l\frac{N_i(L_f)-N_i(L_{f-1})}{f} \\
&=&1+\sum\limits_{f=1}\limits^l\frac{N_i(L_{f})}{f}-\sum\limits_{f=0}\limits^{l-1}\frac{N_i(L_{f-1})}{f+1} 
\\ &=&\frac{N_i(L_l)}{l}+\sum\limits_{f=1}\limits^{l-1}N_i(L_f)\left(\frac{1}{f}-\frac{1}{f+1}\right) \\ 
 &\leqslant& \frac{N_i(L_l)}{l}+\sum\limits_{f=1}\limits^{l-1}\frac{1}{f^2},
\end{eqnarray*}
using the same argument as \cite{Agrawal:88}. Then it follows that 
\begin{equation}
\label{equ:sit}
\mathbb{E}[S_i(T)]\leqslant \frac{\mathbb{E}[N_i(L_l)]}{l}+\sum\limits_{f=1}\limits^{l-1}\frac{\mathbb{E}[N_i(L_f)]}{f^2}.
\end{equation}
With the upper bound on $\mathbb{E}[N_i(T)]$ and $L_f\leqslant 2^{f^2}+f^2 \leqslant 2^{f^2+1}$, \eqref{equ:sit} can be further deduced as
\begin{eqnarray}
\nonumber && \mathbb{E}[S_i(T)] \leqslant \frac{C_1^i\log L_l+C_2^i}{l}+\sum\limits_{f=1}\limits^{l-1}\frac{C_1^i\log L_f+C_2^i}{f^2}
\\\nonumber &\leqslant& \frac{C_2^i}{l}+\sum\limits_{f=1}\limits^{l-1}\frac{C_2^i}{f^2}+\frac{C_1^i\log2^{l^2+1}}{l}+\sum\limits_{f=1}\limits^{l-1}\frac{C_1^i\log{2^{f^2+1}}}{f^2}
\\\nonumber&\leqslant& C_2^i\left(1+\frac{\pi^2}{6}\right)+\log2C_1^i\left(l+\frac{\pi^2}{6}\right)
\\ \nonumber &\leqslant& \log2C_1^i\sqrt{\log_2T}+(C_2^i+\log2C_1^i)\left(1+\frac{\pi^2}{6}\right).
\end{eqnarray}
Finally, the cumulative switching cost can be bounded as
\begin{eqnarray}
{{\sf{SC}}(T)} &\leqslant& \sum\limits_{i=1,i\neq i^*}\limits^n\tilde{s}_i^{max}\mathbb{E}[S_i(T)]+\tilde{s}_{i^*}^{max}\mathbb{E}[S_{i^*}(T)] \nonumber \\
& \leqslant&  \sum\limits_{i=1,i\neq i^*}\limits^n(\tilde{s}_i^{max}+\tilde{s}_{i^*}^{max})\mathbb{E}[S_i(T)]+\tilde{s}_{i^*}^{max}. \nonumber
\end{eqnarray}

\section{Proof of Corollary \ref{cor:cbpareg}}
\label{apd:proof2}

For the CBPA algorithm, \eqref{eqn:NiT} still holds. Hence, the argument from \eqref{equ:1} to \eqref{eqn:Nitau} equally applies to any time slot $t=1,2,..,T$. The proof is complete by rewriting \eqref{eqn:NiT}  as
\begin{eqnarray}
\nonumber\mathbb{E}[N_i(T)]&\leqslant& \eta_i+\tau_1+\tau_2+\frac{1}{2}\left(e^{\frac{3M_{i^*}^2}{2\sigma_0^2}}+e^{\frac{3M_{i}^2}{2\sigma_0^2}}\right)\sum\limits_{t=1}\limits^Tt^{-\frac{9}{8}}
\\ \nonumber &\leqslant&\lceil \frac{4\sigma_{0}^2}{\Delta_{i}^2}(\log{2\pi e}+4\log{T})-1 \rceil+\hat{N}_{i},
\\ \nonumber  \hat{N}_{i}&=&e^{\frac{M_{i^*}^2}{3\sigma_0^2}}+e^{\frac{M_{i}^2}{3\sigma_0^2}}+\frac{9}{2}\left(e^{\frac{3M_{i}^2}{2\sigma_0^2}}+e^{\frac{3M_{i}^2}{2\sigma_0^2}}\right).
\end{eqnarray}

\section{Proof of Corollary \ref{cor:cpareg}}
\label{apd:proof3}

In the BPA algorithm, inequalities \eqref{eqn:NiT}\eqref{equ:1}\eqref{equ:2}\eqref{equ:3} still hold for any time slot $t=1,2,..,T$, with $U_i(t)=\frac{\sigma_0}{\sqrt{1+N_i(t)}}\Phi^{-1}(1-1/\sqrt{2\pi e}t^2)$. The estimated mean $\hat{\mu}_i(t)$ is a Gaussian random variable with mean $\frac{\mu_i^0+N_i(t)\mu_i}{1+N_i(t)}$ and variance $\frac{N_i(t)\sigma_0^2}{(1+N_i(t))^2}$. The proof then follows the similar steps as Appendix \ref{apd:proof1}, with inequality \eqref{eqn:P} written as
\begin{equation*}
\begin{aligned}
&\mathbb{P}\{\hat{\mu}_{i^*}(t)\leqslant \mu_{i^*}-U_{i^*}(t)\}\leqslant \\
&\mathbb{P}\left\{z\geqslant \sqrt{\frac{N_{i^*}+1}{N_{i^*}}}\Phi^{-1}\left(1-\frac{1}{\sqrt{2\pi e}\tau_{fk}^2}\right)-\frac{\Delta m_{i^*}}{\sigma_0\sqrt{N_{i^*}(t)}}\right\}.
\end{aligned}
\end{equation*}
Thus, inequalities \eqref{equ:1pro} and \eqref{equ:2pro} become
\begin{eqnarray}
\mathbb{P}\{\eqref{equ:1} \text{ holds}, t>\tau_1\}\geqslant \frac{1}{2}e^{\frac{3\Delta m_{i^*}^2}{2\sigma_0^2}}t^{-\frac{9}{8}},\quad\tau_1=e^{\frac{\Delta m_{i^*}^2}{3\sigma_0^2}} \nonumber
\\ \mathbb{P}\{\eqref{equ:2} \text{ holds}, t>\tau_2\}\geqslant \frac{1}{2}e^{\frac{3\Delta m_{i}^2}{2\sigma_0^2}}t^{-\frac{9}{8}}, \quad\tau_2=e^{\frac{\Delta m_{i}^2}{3\sigma_0^2}}. \nonumber
\end{eqnarray}
This leads to
\begin{equation*}
\begin{aligned}
&\mathbb{E}[N_i(T)] \leqslant \eta_i+\tau_1+\tau_2+\frac{1}{2}\left(e^{\frac{3\Delta m_{i^*}^2}{2\sigma_0^2}}+e^{\frac{3 \Delta m_{i}^2}{2\sigma_0^2}}\right)\sum\limits_{t=1}\limits^Tt^{-\frac{9}{8}} \\
 &\leqslant \lceil \frac{4\sigma_{0}^2}{\Delta_{i}^2}(\log{2\pi e}+4\log{T})-1 \rceil+e^{\frac{\Delta m_{i^*}^2}{3\sigma_0^2}}+e^{\frac{\Delta m_{i}^2}{3\sigma_0^2}}+ \\
& \hspace{10em} \frac{9}{2}e^{\frac{3\Delta m_{i^*}^2}{2\sigma_0^2}}+\frac{9}{2}e^{\frac{3\Delta m_{i}^2}{2\sigma_0^2}},
\end{aligned}
\end{equation*}
which completes the proof.

\section{Proof of Theorem \ref{thm:uipareg}}
\label{apd:proof4}

According to the Lemma 1 in \cite{Srivastava:15}, the utility function $Q_i^{UiPA}$ can be written as
\begin{equation*}
Q^{UiPA}_{i}(t)\leqslant  \bar{r}_{i}(t)+U_i(t),
\end{equation*}
with 
\begin{equation*}
U_i(t) \doteq \sqrt{\frac{\sum\limits_{\tau=1}\limits^{t}{r_{i}^2(\tau)}-\bar{r}_i^2(t)N_{i}(t)}{(N_{i}(t)-1)N_{i}(t)} (\log{2\pi e}+4\log{t})}.
\end{equation*}
Then, we can use \cite[Theorem 4]{Auer:02} to bound the expected loss. We have \eqref{eqn:uipaP1}, shown at the top of the next page, 
\begin{figure*}[!t]
\normalsize
\begin{equation}
\label{eqn:uipaP1}
\mathbb{P} \{\hat{\mu}_i(t)\geqslant\mu_i+U_i(t)\}= \mathbb{P} \left\{\frac{\hat{\mu}_i(t)-\mu_i}{\sqrt{(\sum_{\tau=1}^{t}{r_{i}^2(\tau)}-\bar{r}_i^2(t)N_{i}(t))/(N_i(t)(N_i(t)-1))}} \geqslant \sqrt{\log{2\pi e+4\log t}}\right\}
\leqslant 1/\sqrt{2\pi e}t^{-2}
\end{equation}
\hrulefill
\vspace*{4pt}
\end{figure*}
for all $N_i(t)\geqslant\log{2\pi e}/2+2\log t$. Furthermore, $\mathbb{P}\{\hat{\mu}_{i^*}(t)\geqslant\mu_{i^*}+U_{i^*}(t)\}$ can be similarly bounded. Lastly, using the Chi-squared distribution, we have
\begin{equation}
\label{eqn:uipaP2}
\begin{aligned}
& \mathbb{P}\{\mu_{i^*}<\mu_i+2U_i(t)\} = \\
& \mathbb{P}\left\{\frac{\sum_{\tau=1}^{t}{r_{i}^2(\tau)}-\bar{r}_i^2(t)N_{i}(t)}{\sigma_0^2}>\frac{ (N_i(t)-1) \Delta_i^2N_i(t)}{4\sigma_0^2(\log{2\pi e}+4\log{t})}\right\} \\
&\leqslant \mathbb{P}\left\{\frac{\sum_{\tau=1}^{t}{r_{i}^2(\tau)}-\bar{r}_i^2(t)N_{i}(t)}{\sigma_0^2}>4(N_i(t)-1)\right\} \\
&\leqslant e^{-N_i(t)/2} \\
& \leqslant(2\pi e)^{-1/4}t^{-1},
\end{aligned}
\end{equation}
and
\begin{equation}
\label{eqn:uipaN}
N_i(t)\geqslant \max\left\{\frac{16\sigma_0^2}{\Delta_i^2},\frac{1}{2}\right\}(\log{2\pi e}+4\log T).
\end{equation}
Combining \eqref{eqn:uipaP1}\eqref{eqn:uipaP2}\eqref{eqn:uipaN}, $N_i(t)$ can be bounded as
\begin{equation*}
\begin{aligned}
&  N_i(T)  \leqslant \max\left\{\frac{16\sigma_0^2}{\Delta_i^2},\frac{1}{2}\right\}(\log{2\pi e}+4\log T)+ \\
 & \sum\limits_{t=1}\limits^T\left(2/\sqrt{2\pi e}t^{-2}+(2\pi e)^{-1/4}t^{-1}\right) \\ 
 & \leqslant \frac{16\sigma_0^2}{\Delta_i^2}(\log{2\pi e}+4\log T)+ \\
 & ((2\pi e)^{-1/4}+2)\log T+\frac{\log{2\pi e}}{2}+\frac{2}{\sqrt{2\pi e}}.
\end{aligned}
\end{equation*}
This completes the proof.

\bibliographystyle{IEEEtran}
\bibliography{scMAB3}

\end{document}